\newcommand{\blind}{1}
\def\@mb@citenamelist{cite,citep,citet,citealp,citealt,citepalias,citetalias}
\newcites{supp}{References}
\theoremstyle{plain} \newtheorem{theorem}{Theorem} \newtheorem{proposition}{Proposition} \newtheorem{lemma}{Lemma} \newtheorem{corollary}{Corollary}
\theoremstyle{definition}   
\theoremstyle{remark}
\newcommand{\bigzero}{\mbox{\normalfont\Large\bfseries 0}}
\begin{document}

\def\spacingset#1{\renewcommand{\baselinestretch}%
{#1}\small\normalsize} \spacingset{1}


\if1\blind
{
  \title{\bf Warped Dynamic Linear Models for Time Series of Counts}
  \author{Brian King and 
    Daniel R. Kowal\thanks{Brian King is PhD Graduate, Department of Statistics, Rice University, Houston, TX (\href{mailto:bking@rice.edu}{bking@rice.edu}).  Daniel R. Kowal is Dobelman Family Assistant Professor, Department of Statistics, Rice University, Houston, TX (\href{mailto:Daniel.Kowal@rice.edu}{Daniel.Kowal@rice.edu}).
    This material is based upon work supported by the National Science Foundation: a Graduate Research Fellowship under Grant No. 1842494 (King) and SES-2214726 (Kowal).}\hspace{.2cm} \\
    Department of Statistics, Rice University}
    \date{\vspace{-5ex}} \maketitle
} \fi

\if0\blind
{
  \bigskip
  \bigskip
  \bigskip
  \begin{center}
    {\LARGE\bf Warped Dynamic Linear Models for Time Series of Counts}
\end{center}
  \medskip
} \fi

\medskip
\begin{abstract}
Dynamic Linear Models (DLMs) are commonly employed for time series analysis due to their versatile structure, simple recursive updating, ability to handle missing data, and probabilistic forecasting. However, the options for count time series are limited: Gaussian DLMs require continuous data, while Poisson-based alternatives often lack sufficient modeling flexibility. We introduce a novel semiparametric methodology for count time series by \emph{warping} a Gaussian DLM. The warping function has two components: a (nonparametric) transformation operator that provides distributional flexibility and a rounding operator that ensures the correct support for the discrete data-generating process.  We develop conjugate inference for the warped DLM, which enables analytic and recursive updates for the state space filtering and smoothing distributions. We leverage these results to produce customized and efficient algorithms for inference and forecasting, including Monte Carlo simulation for offline analysis and an optimal particle filter for online inference. This framework unifies and extends a variety of discrete time series models and is valid for  natural counts,  rounded values, and multivariate observations. Simulation studies illustrate the excellent forecasting capabilities of the warped DLM. The proposed approach is applied to a multivariate time series of daily overdose counts and demonstrates both modeling and computational successes. 
\end{abstract}

\noindent%
{\it Keywords:}  Bayesian statistics; state-space model; particle filter; selection normal 
\vfill

\newpage
\spacingset{1.45} 
\section{Introduction}
\label{sec:intro}
Count time series data inherit all the complexities of continuous time series data: the time-ordered observations may be multivariate, seasonal, dependent on exogenous variables, and exhibit a wide variety of autocorrelation structures. At the same time, count data often present uniquely challenging distributional features, including zero-inflation, over-/underdispersion, boundedness or censoring, and heaping. Additionally, discrete data require distinct strategies for probabilistic forecasting, uncertainty quantification, and evaluation.  As modern datasets commonly feature higher resolutions and lengthier time series, computational tools for both online and offline inference and forecasting are in demand. Fundamentally, the goals in count time series modeling are similar to those in the continuous setting: forecasting, trend filtering/smoothing, seasonal decomposition, and characterization of inter- and intra-series dependence, among others.  

In this paper, we develop methods, theory, and computing tools for a broad class of multivariate state space models that address each of these challenges and objectives. The core model is defined by   \textit{warping} a Gaussian Dynamic Linear Model (DLM; \citealp{WestHarrisonDLM}),  which we refer to as a \emph{warped DLM} (warpDLM):
\begin{align}
        \label{r-t}
        \bm y_t &= h \circ g^{-1}(\bm z_t) && \text{(warping)} \\
        \label{latentdlm}
        \{\bm z_t\}_{t=1}^T &\sim \text{DLM} && \text{(see \eqref{dlm-obs} and \eqref{dlm-evol})} 
\end{align}
where $\bm y_t \in \mathbb{N}^n$ is the observed count data and $\bm z_t \in \mathbb{R}^n$ is continuous latent data. 
The warping has two components: a rounding operator $h: \mathcal{T} \rightarrow \mathbb{N}^n$, which ensures the correct support for the discrete data-generating process, and a monotone transformation function $g: \mathcal{T} \rightarrow \mathbb{R}^n$, which endows (possibly nonparametric) flexibility in the marginal distributions. The latent DLM enables straightforward embedding of familiar dynamic modeling structures such as local levels, seasonality, or covariates, along with a natural way of dealing with missing data. Note that this article focuses on modeling counts, but the framework is easily adaptable for general integer-valued  or rounded data. 

To isolate the rounding and transformation operations, the warping operation may be decomposed into $\bm y_t = h(\bm y_t^*)$ and $\bm z_t = g(\bm y_t^*)$ for a latent continuous variable $\bm  y_t^*$. We emphasize that the warpDLM is fundamentally distinct from the ``transformed DLM" strategy of (i) fitting a Gaussian DLM to (possibly transformed,  e.g., logarithmically) count data and (ii) rounding the resulting (continuous) forecasts. The ``transformed DLM" fails to account for the  discreteness of the data in the model-fitting process and introduces a critical mismatch  between the \emph{fitted model} and the \emph{data-generating process}. If the terminal rounding step is omitted, then the data-generating  process is not discrete; yet unless the rounding step is included within the model-fitting process---as in the warpDLM---then the fitted model fails to account for crucial features in the data.  In particular, the rounding operation is a nontrivial component of the model: within the warpDLM, $h$ provides the capability to model challenging discrete distributional features, such as zero-inflation, boundedness, or censoring (see Section~\ref{sec:warpedmethod}). Transformation-only models are known to be ineffective in many settings, such as low counts \citep{dontlogtransform2010}, yet  the warpDLM excels precisely in this case (see Section~\ref{subsec:sim}). 

A key contribution of this article is to develop conjugate inference for the warpDLM. In particular, we show that  the warpDLM likelihood is conjugate to the \emph{selection normal distribution} (e.g.,  \citealp{ArellanoValle2006}). Based on this result, we derive analytic and recursive updates for the warpDLM filtering and smoothing distributions. Crucially, we provide direct Monte Carlo simulators for these distributions---as well as the count-valued forecasting distribution---and construct an optimal particle filter for online inference. These models, derivations, and algorithms remain valid in the multivariate setting, and provide significant advancements over existing latent models for count time series (e.g., \citealp{latentgaussiancount_2021}).  To the best of our knowledge, these results  are unique for multivariate count time series models.

The warpDLM offers a unified framework for several discrete data models (dynamic as well as static) and incorporates strategies which have proven successful in other related methods. In the non-dynamic realm, \cite{Siegfried_Hothorn_2020} demonstrated the benefits of learned transformations for discrete data linear regression, while  \cite{kowal2021semiparametric} adopted a related transformation and rounding strategy to model heaped count data. Both took a frequentist approach to estimation. Among Bayesian methods, \cite{canale2011bayesian} and \cite{canale2013nonparametric} similarly applied rounded Gaussian and Dirichlet processes, respectively, without the transformation considerations. \cite{Kowal2020a} showcased the advantage of both the rounding \emph{and} transformation components within a static regression setting, and \cite{kowalSTARconjugate} extended this framework to incorporate Bayesian nonparametric estimation of the transformation. Looking to dynamic models, when $g$ is viewed as a copula, the warpDLM resembles the count time series model proposed by \cite{latentgaussiancount_2021}. Unlike  \cite{latentgaussiancount_2021}, we do not focus on stationary latent Gaussian processes, but rather incorporate DLMs to enable nonstationary modeling, dynamic covariates, and Bayesian inference and forecasting within a familiar setting. The warpDLM also generalizes recent work done in the binary data space, in particular the dynamic probit model of \cite{Fasano2021}. In doing so, we construct novel theory utilizing a broader class of distributions and requiring distinct computations of relevant posterior quantities.


The warpDLM framework falls in the category of generalized state space models, which are traditionally separated into two classes, stemming from \cite{Cox1981}: parameter-driven and observation-driven. In observation-driven models, the state process is treated as an explicit function of past data values, and methods often utilize likelihood-based inference born out of GLM theory (see \cite{fokianos2015GLM} for a review). Bayesian methods, including the warpDLM framework, are more commonly parameter-driven, which means the latent state parameter is treated as stochastic; model learning typically proceeds via recursive updating, e.g., using the Kalman filter \citep{kalmanOG}. The DLM is the most well-known model in this category, but relies on Gaussian assumptions unmet by count data. Dynamic Generalized Linear Models (DGLMs) were developed to adapt the state space framework for non-Gaussian data within the exponential family \citep{West_Harrison_Migon_1985}. For most count data, Poisson is the only available observational density that belongs to the exponential family. Binomial DGLMs are applicable for bounded data, yet perform quite poorly in the case of zero-inflation or heaping on  the boundary  (see Section~\ref{subsec:sim}). The negative binomial distribution with fixed dispersion parameter is also exponential family, but there is often little guidance for determining the  dispersion parameter.  \cite{Berry_West_2019} recently extended the DGLM family by mixing Bernoulli and Poisson DGLMs to better model zero-inflated and overdispersed count data.


The Poisson DGLM and extensions provide ``data coherent'' \citep{Freeland_McCabe_2004} inferences and forecasts, in the sense that predictions are appropriate to the type of data (count time series),  but the closed-form Kalman filtering results are generally unavailable. In most DGLM specifications, the evolution equation is assumed Gaussian (see \eqref{dlm-evol}), which necessitates linearized approximations or MCMC algorithms  for smoothing, filtering, and forecasting. \cite{West_Harrison_Migon_1985} prioritized closed-form and conjugate recursions, yet require approximate and moment-based (rather than distributional) updates for state parameters. \cite{Fahrmeir1992} designed an extended Kalman Filter to estimate posterior modes in a multivariate setting.  \cite{durbinkoopman2000} used importance sampling based on a linear approximation. \cite{fruhwirth2006auxiliary} constructed an approximate Gibbs sampler using data augmentation and mixture sampling. Another strategy is to replace the Gaussian evolution equation.   \cite{Gamerman2013} used a Poisson model with a multiplicative state update to preserve analytic and recursive inference. 
However, this updating structure has limited dynamic flexibility, for example to include seasonality or covariates.  \cite{Aktekin2018} proposed a multivariate  extension of this multiplicative model, but the analytic updating results were not preserved and the multivariate structure only accommodated positive correlations among the series. Another exception is \cite{Bradley_Holan_Wikle_2018}, who proposed log-Gamma processes that are conditionally conjugate to the Poisson distribution. This model still requires Gibbs sampling for all state smoothing, filtering, and forecasting distributions.


The common limitations among existing state space models for count data are (i) a lack of exact, coherent, and recursive updates for filtering, smoothing, and forecasting distributions and (ii) restricted options for count-valued distributions. The warpDLM framework directly addresses and overcomes both limitations.

The paper is organized as follows.  Section \ref{sec:warpedmethod} introduces DLMs, the proposed model, and examples for the rounding and transformation. In Section \ref{sec:theory}, we derive the smoothing, filtering, and forecasting distributions for the warpDLM.  We discuss computing strategies for online and offline inference in Section \ref{sec:comp}.  Finally, we present forecasting results on simulated data as well as a real-data application in Section \ref{sec:application} before concluding. Supplementary material includes proofs of presented theorems, additional simulations, and further details on the application data and model specification.  Code to reproduce all findings is also available on \href{https://github.com/bking124/warpDLM-reproducible-code}{GitHub}. warpDLM functionality is included in the \verb|R| package \verb|countSTAR| \citep{countSTAR}, which is available on \href{https://CRAN.R-project.org/package=countSTAR}{CRAN} and documented in an \href{https://bking124.github.io/countSTAR/articles/countSTAR.html}{online vignette}.

\section{Dynamic Linear Models and Time Series of Counts}
\label{sec:warpedmethod}
The broad success of Bayesian time series analysis has largely been driven by Dynamic Linear Models (DLMs), also known as linear state space models.  The DLM framework subsumes ARIMA models and provides decomposition of time series, dynamic regression analysis, and multivariate modeling capabilities.  Additionally, the sequential updating structure provides a simple way to deal with missing observations (cf. \citealp[Section~2.7]{Durbin_Koopman_2012}). DLMs are widely popular not only because of their versatility, but also because the filtering, smoothing, and forecasting distributions are available in closed-form via the recursive Kalman filter. 

A DLM is defined by two equations: (i) the observation equation, which specifies how the observations are related to the latent state vector and (ii) the state evolution equation, which describes how the states are updated in a Markovian fashion. We present the Gaussian DLM for an \emph{observable} continuous $n$-dimensional time series $\{\bm z_t\}_{t=1}^T$, but note that these continuous variables are latent (i.e., unobservable) within the warpDLM:
\begin{align}
    \label{dlm-obs}
    \bm z_t &= \bm F_t \bm \theta_t + \bm v_t, \quad \bm v_t \sim N_n(\bm 0, \bm V_t) \\
    \label{dlm-evol}
    \bm \theta_t &= \bm G_t \bm \theta_{t-1} + \bm w_t, \quad \bm w_t \sim N_p(\bm 0, \bm W_t)
\end{align}
for $t=1,\ldots, T$, where $\{\bm v_t, \bm w_t\}_{t=1}^T$ are mutually independent and $\bm \theta_0 \sim N_p(\bm a_0, \bm R_0)$. Depending on the DLM specification, the $p$-dimensional  state vector  $\bm \theta_t$ could describe a local level, regression coefficients, or seasonal components, among other features. Common choices of the $n \times p$ observation matrix $F_t$ include the identity $(n=p)$, a matrix  of indicators that selects certain elements of $\bm \theta_t$ (e.g., for structural time series models), and covariate values (e.g., for dynamic regression analysis). The  $p \times p$ state evolution matrix $\bm G_t$  is frequently the identity but can be more complex, for example to capture seasonality. The observation and evolution covariance matrices  are $\bm V_t$ ($n \times n$) and $\bm W_t$ ($p \times p$), respectively.     Taken together, the quadruple $\{\bm F_t, \bm G_t, \bm V_t, \bm W_t\}_{t=1}^T$  defines the DLM.  Often, these matrices will be time-invariant.

The Gaussian DLM \eqref{dlm-obs}--\eqref{dlm-evol} is data-incoherent for discrete time series: the forecasting distribution does not match the support  of the data. DGLMs offer one resolution by replacing the Gaussian observation equation with an exponential family distribution. However, the primary option for count  data is the Poisson distribution, which  is often inadequate and requires additional modeling layers for common discrete data features such  as zero-inflation, over/underdispersion, boundedness or censoring, and heaping. These additional layers introduce significant modeling and computational complexity. By comparisons, the warpDLM  is capable of modeling \emph{each} of these distributional features under default specifications via the warping (rounding and transformation) operation, yet  maintains the useful and familiar state space formulation through the latent DLM.  By leveraging Gaussian state space models, the warpDLM builds on the long history of theoretical and computational tools  \citep{WestHarrisonDLM, petrisDLM, Prado_Ferreira_West_2021}, and operates within a familiar setting  for practitioners.  

The warpDLM framework links count data $\bm y_t$ with a (latent) Gaussian DLM for $\bm z_t$ in \eqref{dlm-obs}--\eqref{dlm-evol} via the warping operation \eqref{r-t}.  The rounding operation $h: \mathcal{T} \rightarrow \mathbb{N}^n$ serves as the connection mechanism between the real-valued latent space and the non-negative integers, setting $h(\bm y^*)=\bm j$ for any $\bm y^* \in \mathcal{A}_j$. These pre-image sets $\mathcal{A}_j$ form a disjoint partition of the space $\mathcal{T}$. In the univariate setting, $\mathcal{A}_j = [a_j, a_{j+1})$ is simply an interval. The warpDLM likelihood can thus be written as
\begin{equation}
\label{likelihood-time}
\mathbb{P}(\bm y_t = j | \bm \theta) = 
\mathbb{P}\{\bm z_t \in g(\mathcal{A}_j) | \bm \theta\}, \quad t=1,\ldots,T
\end{equation}
for $j \in \mathbb{N}^n$, where $\bm z_t \in g(\mathcal{A}_j)$  is defined elementwise when $\bm y_t$ is multivariate. 

Both components of the warping operation serve important purposes that lead directly to desirable model properties.  The rounding function ensures that the warpDLM has the correct support for the (possibly bounded or censored) count data.  For simplicity, suppose $n=1$; generalizations occur by applying these specifications elementwise. 
By default, we take the rounding function to be the floor function, so $\mathcal{A}_j = [j, j+1)$. In addition, we include the zero modification $g(\mathcal{A}_0) = (-\infty, 0)$ so that $ y_t = 0$ whenever $ z_t < 0$. This specification maps much of the latent space to zero, with persistence of zeros determined by the  DLM \eqref{dlm-obs}--\eqref{dlm-evol}, for example,  $\mathbb{P}( y_t = 0 |  y_{t-1} = 0) = \mathbb{P}( z_t < 0 |  z_{t-1} < 0)$.
Similarly, when there is a known upper bound $y_{max}$ due to natural bounds or censoring, we may simply set $\mathcal{A}_{y_{max}} = [y_{\max}, \infty)$ so that the warpDLM has the correct  support, $\mathbb{P}( y_t \le y_{max} | \bm\theta) = 1$.  These useful rounding operation properties are formalized in \cite{Kowal2020a} and \cite{kowal2021semiparametric}. Importantly, the rounding operator does not require any modification to the computing algorithms: once it is specified, inference proceeds the exact same way for all choices of $h$.

While the rounding operation matches the discreteness and support of  the data, the transformation enables (nonparametric) distributional flexibility.  We apply the transformation elementwise, $g(\bm y_t^*) = (g_1(y_{1,t}^*),\ldots, g_n(y_{n,t}^*))'$, and again present the $n=1$ case for simplicity. 
The only requirement of the transformation $g$ is that it be strictly monotonic, which preserves ordering in the latent data space and ensures an inverse exists. We present both \emph{parametric} and \emph{nonparametric} modeling strategies. Parametric examples include classical transformations of count data, such as logarithmic, square-root, or identity transformations, and introduce no additional parameters into the model. 

The nonparametric strategy uses a flexible and data-driven approach to infer the transformation based on the marginal distribution of each component of $\bm y$. Specifically, let $\mathcal{A}_j = [a_j, a_{j+1})$ as above and consider $n=1$. The cumulative distribution function (CDF) of $ y$ and $ z$ are linked via $F_y(j) = F_z\{g(a_{j+1})\},$ which suggests the transformation
\begin{equation}\label{g-hat}
\hat g_0(a_{j+1}) = \bar y + \hat s_y \Phi^{-1}\{\tilde F_y(j)\},
\end{equation}
where $\tilde F_y$ is an estimate or model for $F_y$ and $\bar y$ and $\hat s_y$ are the sample mean and sample standard deviation, respectively, of $\{ y_t\}_{t=1}^T$, to  match the marginal moments of $y$ and $z$ \citep{kowal2021semiparametric}. We smoothly interpolate $( y_t, \hat g_0(a_{ y_t + 1}))$  using a monotonic spline, which ensures that the warpDLM is supported on $\mathbb{N}$ instead of only the observed data values. Hence, a (nonparametric) model for $g$ may be equivalently specified by a (nonparametric) model for $F_y$. We adopt the (rescaled) empirical CDF $\tilde F_y(j) = (T+1)^{-1} \sum_{t=1}^T\mathbb{I}(y_t  \le j)$, which implies a \emph{semiparametric} model for the warpDLM. Many other models for $F_y$ are compatible within the warpDLM, including Bayesian nonparametric  models and parametric distributions (e.g., Poisson or Negative  Binomial marginals). By design, the model for the transformation $g$ is decoupled from both the rounding operator $h$ and the DLM \eqref{dlm-obs}--\eqref{dlm-evol}, so the subsequent derivations  and algorithms require only trivial modifications for distinct  choices of $g$.

The nonparametric transformation serves as a reasonable default across a variety  of scenarios, as suggested by our results in Section \ref{sec:application} and further confirmed in other (non-time series) settings (e.g. \citealp{kowal2021semiparametric}). To compare models with different transformations, there are a variety of choices. In the simulation study of Section \ref{subsec:sim}, we select the best model based on out-of-sample forecasting performance using leave-future-out cross-validation. However, in practice, any Bayesian model comparison metric could be used, such as the widely applicable or Watanabe-Akaike information criterion (WAIC, \citealp{watanabe2010asymptotic}) or approaches more specifically tailored to time series analysis \citep{burkner2020approximate}. 

\section{Exact Filtering and Smoothing}
\label{sec:theory}
In this section, we derive the recursive updates for the warpDLM filtering, smoothing, and forecasting distributions. Currently, these \emph{exact} results stemming from a \emph{coherent} joint distribution \eqref{r-t}--\eqref{latentdlm} are unique among state space models for multivariate count (or rounded) data, and crucially enable MCMC-free inference and forecasting. 

\subsection{Selection Distributions and the warpDLM}
\label{subsec:selectionbasics}
Consider the first time step $t=1$ of the warpDLM. Here, we omit the time subscripts for simplicity. The latent data \eqref{latentdlm} are described by the two DLM equations \eqref{dlm-obs}--\eqref{dlm-evol}, which can be rewritten as a single equation
\begin{equation}
\label{eq:firststepsimple}
\bm z = \bm F \bm \theta + \bm v, \quad \bm v \sim N_n(\bm 0, \bm V) 
\end{equation}
 with $\bm \theta \coloneqq \bm \theta_1$ and the  prior $\bm \theta \sim N_p(\bm \mu_\theta= \bm G a_0, \bm \Sigma_\theta=\bm G \bm R_0 \bm G' + \bm W)$. Using the likelihood \eqref{likelihood-time}, the posterior distribution is
\begin{equation}
    \label{posterior-gen}
    p(\bm \theta | \bm y) = p(\bm \theta | \bm z\in \mathcal{C}) = \frac{p(\bm \theta) p\{\bm z \in \mathcal{C} | \bm \theta\}}{p\{\bm z \in \mathcal{C}\}}
\end{equation}
which is also the first-step filtering distribution. Within the warpDLM, conditioning on $\bm y$ is equivalent to conditioning on  $\bm z$ belonging to a set $\mathcal{C}= g(\mathcal{A}_{\bm y})$. Although the representation is general, this set is typically simple: the default floor operator for $h$ implies that  $\mathcal{A}_{\bm  y} = [y_1,  y_1 + 1) \times \cdots \times [y_n, y_n + 1)$, so that  $\bm  z \in \mathcal{C}$ implies that each element of $\bm z$ belongs to a (transformed) interval.

A distribution of the form \eqref{posterior-gen} is known as a selection distribution \citep{ArellanoValle2006}.  When the two variables $\bm \theta$ and $\bm z$ are jointly normal---as in the warpDLM---the resulting distribution is a \emph{selection normal} (SLCT-N). More formally, given the joint distribution 
\begin{equation*}
    \begin{pmatrix}
    \bm z \\ \bm \theta 
    \end{pmatrix} \sim 
   N_{n + p} \left\{
    \begin{pmatrix} \bm \mu_z \\ \bm \mu_\theta \end{pmatrix},
     \begin{pmatrix} \bm \Sigma_z & \bm\Sigma_{z\theta} \\ \bm \Sigma_{z\theta}' & \bm  \Sigma_\theta \end{pmatrix}
    \right\}
\end{equation*}
we denote the conditional random variable $[\bm \theta | \bm z \in \mathcal{C}] \sim \mbox{SLCT-N}_{n, p}(\bm \mu_z, \bm \mu_\theta, \bm \Sigma_z, \bm \Sigma_\theta, \bm \Sigma_{z\theta},  \mathcal{C})$ for constraint region $\mathcal{C}$. This random variable has density 
\begin{equation}
    \label{density-slct-n}
    p(\bm \theta | \bm z \in \mathcal{C}) = \phi_p(\bm \theta; \bm \mu_\theta, \bm \Sigma_\theta) \frac{\bar \Phi_n(\mathcal{C}; \bm\Sigma_{z\theta} \bm \Sigma_\theta^{-1}(\bm \theta - \bm \mu_\theta) + \bm \mu_z, \bm \Sigma_z - \bm\Sigma_{z\theta}\bm \Sigma_\theta^{-1}\bm\Sigma_{z\theta}')}{ \bar\Phi_n(\mathcal{C}; \bm \mu_z, \bm \Sigma_z)}
\end{equation}
where $\phi_p(\cdot; \bm \mu, \bm \Sigma)$ denotes the Gaussian density function of a Gaussian random variable with mean $\bm \mu$ and covariance $\bm \Sigma$ and $\bar\Phi_n(\mathcal{C}; \bm \mu, \bm \Sigma) = \int_\mathcal{C} \phi_n(\bm x; \bm \mu, \bm \Sigma) d \bm x$.  The density \eqref{density-slct-n} is somewhat unwieldy in practice, but there is also a constructive representation which allows for direct Monte Carlo simulation from the posterior density $p(\bm \theta | \bm y)$ (see Section~\ref{sec:comp}).

For the first-step model \eqref{eq:firststepsimple} with prior $\bm \theta \sim N_p(\bm \mu_\theta, \bm \Sigma_\theta)$, we report the exact posterior distribution:

\begin{theorem}
    \label{normal-conjugacy}
    Under \eqref{eq:firststepsimple}, the posterior distribution is $[\bm \theta | \bm y] \sim \mbox{SLCT-N}_{n, p}(\bm \mu_z = \bm F \bm \mu_\theta, \bm \mu_\theta, \bm \Sigma_z = \bm F \bm \Sigma_\theta \bm F' + \bm V, \bm \Sigma_\theta, \bm \Sigma_{z\theta} = \bm F \bm \Sigma_\theta,  \mathcal{C} = g(\mathcal{A}_{\bm y}))$. 
\end{theorem}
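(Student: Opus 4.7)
The plan is to assemble the joint distribution of $(\bm z, \bm \theta)$, identify its Gaussian parameters, and then invoke the definition of the selection normal distribution applied to the event $\{\bm z \in \mathcal{C}\}$, which by the warping likelihood \eqref{likelihood-time} is equivalent to conditioning on $\bm y$.

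First, I would establish that $(\bm z, \bm \theta)$ is jointly Gaussian. Since $\bm \theta \sim N_p(\bm \mu_\theta, \bm \Sigma_\theta)$ and $\bm z = \bm F \bm \theta + \bm v$ with $\bm v \sim N_n(\bm 0, \bm V)$ independent of $\bm \theta$, the pair is an affine image of a Gaussian vector and is therefore Gaussian. Straightforward moment calculations then give $E(\bm z) = \bm F \bm \mu_\theta$, $\text{Cov}(\bm z) = \bm F \bm \Sigma_\theta \bm F' + \bm V$, and the cross-covariance $\text{Cov}(\bm z, \bm \theta) = \bm F \bm \Sigma_\theta$, matching the parameters claimed in the theorem statement.

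Next, I would tie this joint law to the warped likelihood. By the general identity \eqref{posterior-gen}, conditioning on $\bm y$ is the same as conditioning on the event $\{\bm z \in \mathcal{C}\}$ for $\mathcal{C} = g(\mathcal{A}_{\bm y})$, which is a measurable region because $g$ is strictly monotone elementwise and the $\mathcal{A}_{\bm j}$ partition $\mathcal{T}$. The resulting conditional law $[\bm \theta \mid \bm z \in \mathcal{C}]$ with $(\bm z, \bm \theta)$ jointly Gaussian is, by the definition recorded just before the theorem, precisely the selection normal distribution $\mbox{SLCT-N}_{n,p}$ with the five moment parameters and the constraint region $\mathcal{C}$. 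Substituting the moments computed in the previous step yields the stated form.

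The main potential obstacle is purely bookkeeping rather than substantive mathematics: one must be careful that the convention for $\bm \Sigma_{z\theta}$ matches that used in the density \eqref{density-slct-n}, so that the term $\bm \Sigma_{z\theta}\bm \Sigma_\theta^{-1}(\bm \theta - \bm \mu_\theta)$ is well-defined, and that the denominator $\bar\Phi_n(\mathcal{C}; \bm \mu_z, \bm \Sigma_z)$ in \eqref{density-slct-n} is strictly positive whenever $\bm y$ is observed (which holds as long as $\mathcal{C}$ has positive Lebesgue measure under the Gaussian marginal of $\bm z$, a property built into the rounding partition). No Kalman-style recursion is needed at this stage because Theorem~\ref{normal-conjugacy} concerns only the single-step filtering posterior; the recursive content will be developed later by iterating this conjugacy.
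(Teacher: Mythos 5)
Your proposal is correct and matches the paper's treatment: the paper explicitly notes that Theorem~\ref{normal-conjugacy} ``follows directly from the definition of a selection normal distribution and the warpDLM model setup,'' which is precisely the joint-Gaussianity-plus-moment-computation argument you give. The moments $\bm F\bm\mu_\theta$, $\bm F\bm\Sigma_\theta\bm F' + \bm V$, and $\bm F\bm\Sigma_\theta$ and the identification of conditioning on $\bm y$ with the event $\{\bm z \in g(\mathcal{A}_{\bm y})\}$ are exactly as intended.
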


The Gaussian prior in Theorem \ref{normal-conjugacy} is conjugate: Gaussian distributions can be viewed as a limiting case of a $\mbox{SLCT-N}_{n, p}$ where $n=0$. However, the conjugacy suggests a more general prior for $\bm \theta$, namely, the SLCT-N distribution:

\begin{lemma}
\label{lem:conjugacy}
    Consider the prior $\bm \theta \sim \mbox{SLCT-N}_{n_0, p}(\bm \mu_{z_0}, \bm \mu_\theta, \bm \Sigma_{z_0}, \bm \Sigma_\theta, \bm \Sigma_{z_0\theta},  \mathcal{C}_0)$ with the latent-data observation equation \eqref{eq:firststepsimple}. The posterior is 
    \begin{align*}
        [\bm \theta | \bm y] \sim \mbox{SLCT-N}_{n_0 + n, p}\Big\{
        &\bm \mu_{z_1} = \begin{pmatrix} \bm \mu_{z_0} \\ \bm F \bm \mu_\theta \end{pmatrix}, 
        \bm \mu_\theta, 
        \bm \Sigma_{z_1} = \begin{pmatrix} \bm \Sigma_{z_0} & \bm \Sigma_{z_0\theta} \bm F' \\ \bm F \bm \Sigma_{z_0\theta}' & \bm F \bm \Sigma_\theta \bm F' + \bm V \end{pmatrix},  
        \bm \Sigma_\theta, \\
        & \bm \Sigma_{z_1\theta} = \begin{pmatrix} \bm \Sigma_{z_0\theta} \\ \bm F \bm \Sigma_\theta \end{pmatrix},  
        \mathcal{C}_1 = \mathcal{C}_0 \times g(\mathcal{A}_{\bm y}) \Big\}.
    \end{align*}
\end{lemma}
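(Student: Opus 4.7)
The plan is to exploit the constructive representation of the SLCT-N distribution as the conditional law of a Gaussian random vector given a set-valued event, and then stack the old and new latent events into a single augmented constraint. By definition, the prior $\bm \theta \sim \mbox{SLCT-N}_{n_0, p}(\bm \mu_{z_0}, \bm \mu_\theta, \bm \Sigma_{z_0}, \bm \Sigma_\theta, \bm \Sigma_{z_0\theta}, \mathcal{C}_0)$ admits jointly Gaussian $(\bm z_0, \bm \theta)$ with the indicated moments such that the prior coincides with the law of $[\bm \theta \mid \bm z_0 \in \mathcal{C}_0]$. The data enter through $\bm z = \bm F \bm \theta + \bm v$ with $\bm v \sim N_n(\bm 0, \bm V)$ drawn independently of $(\bm z_0, \bm \theta)$, and by \eqref{likelihood-time} observing $\bm y$ is equivalent to observing the event $\{\bm z \in g(\mathcal{A}_{\bm y})\}$.

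Next I would combine the two events. The posterior satisfies
\begin{equation*}
[\bm \theta \mid \bm y] \;\stackrel{d}{=}\; \bigl[\bm \theta \,\big|\, \bm z_0 \in \mathcal{C}_0,\; \bm z \in g(\mathcal{A}_{\bm y})\bigr],
\end{equation*}
and because the two events constrain disjoint blocks of the stacked vector $\bm z_1 \coloneqq (\bm z_0', \bm z')'$, this is the same as $[\bm \theta \mid \bm z_1 \in \mathcal{C}_1]$ with $\mathcal{C}_1 = \mathcal{C}_0 \times g(\mathcal{A}_{\bm y})$. Since $(\bm z_1, \bm \theta)$ is an affine function of the Gaussian vector $(\bm z_0, \bm \theta, \bm v)$, it is itself jointly Gaussian, so $[\bm \theta \mid \bm z_1 \in \mathcal{C}_1]$ is automatically a SLCT-N; the only thing left is to identify its parameters.

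That identification is a block-matrix moment computation. Using independence of $\bm v$ from $(\bm z_0, \bm \theta)$, I would compute $E[\bm z] = \bm F \bm \mu_\theta$, $\mathrm{Var}(\bm z) = \bm F \bm \Sigma_\theta \bm F' + \bm V$, $\mathrm{Cov}(\bm z_0, \bm z) = \mathrm{Cov}(\bm z_0, \bm F \bm \theta) = \bm \Sigma_{z_0 \theta} \bm F'$, and $\mathrm{Cov}(\bm z, \bm \theta) = \bm F \bm \Sigma_\theta$. Assembling these into the block mean and block covariance of $(\bm z_1, \bm \theta)$ yields exactly the $(\bm \mu_{z_1}, \bm \Sigma_{z_1}, \bm \Sigma_{z_1 \theta})$ stated in the lemma, with $\bm \mu_\theta$ and $\bm \Sigma_\theta$ unchanged. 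The SLCT-N parametrization then follows from the definition in Section \ref{subsec:selectionbasics}.

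The proof is largely bookkeeping, so I do not anticipate a genuine mathematical obstacle; the main point requiring care is the independence of $\bm v$ from the prior latent $\bm z_0$, which is inherited from the state space model assumptions and is what decouples the two constraints into a product region $\mathcal{C}_0 \times g(\mathcal{A}_{\bm y})$. As a sanity check, setting $n_0 = 0$ collapses $\bm z_1$ to $\bm z$ and $\mathcal{C}_1$ to $g(\mathcal{A}_{\bm y})$, recovering Theorem \ref{normal-conjugacy} as a special case.
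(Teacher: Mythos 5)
Your proposal is correct and follows essentially the same route as the paper's proof: both represent the prior constructively as $[\bm \theta \mid \bm z_0 \in \mathcal{C}_0]$, stack the prior constraint with the new observation event to form $[\bm \theta \mid \bm z_1 \in \mathcal{C}_0 \times g(\mathcal{A}_{\bm y})]$, and identify the block moments of the augmented Gaussian vector, including the key cross-covariance $\mbox{Cov}(\bm z_0, \bm z) = \bm \Sigma_{z_0\theta}\bm F'$. Your explicit remarks on the independence of $\bm v$ from $(\bm z_0, \bm\theta)$ and the $n_0 = 0$ sanity check are sound additions but do not change the argument.
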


Thus the selection normal distribution is conjugate with the warpDLM likelihood.  Moreover, since the Gaussian distribution is closed under linear transformations, this property is preserved for the selection normal distribution \citep{ArellanoValle2006}. These results link the selection normal distribution with the warpDLM prior-to-posterior updating mechanism, and provide the key building blocks for deriving the warpDLM filtering and smoothing distributions.

\subsection{Filtering and Smoothing for Warped DLMs}
\label{subsec:warpeddlmtheory}
We now proceed to the most general setting for warpDLM, using the conjugacy and closure under linear transformation results to derive the exact forms of the filtering, smoothing, and forecasting distributions. Throughout, we assume the quadruple $\{\bm F_t, \bm G_t, \bm V_t, \bm W_t\}$ for $t=1,\ldots,T$ is known. Estimation of variance parameters is addressed in  Section~\ref{sec:comp}. 

\subsubsection{Filtering}
Adding back the appropriate subscripts to Theorem \ref{normal-conjugacy},  the first-step filtering distribution is
 $   (\bm \theta_1 | \bm y_1 ) \sim \mbox{SLCT-N}_{n,p} (\bm \mu_{z_1} = \bm F_1 \bm a_1,  \bm \mu_{\theta}=\bm a_1, \bm \Sigma_{z_1} = \bm F_1 \bm R_1 \bm F_1' + \bm V_1, 
    \bm \Sigma_\theta = \bm R_1, \bm \Sigma_{z_1\theta} = \bm F_1 \bm R_1, \bm C_1 = g(\mathcal{A}_{\bm y_1})),$
where $\bm a_1=\bm G_1 \bm a_0$ and $\bm R_1=\bm G_1 \bm R_0 \bm G_1' + \bm W_1$. Given the first-step filtering distribution, we proceed inductively for time $t$:
\begin{theorem}
\label{theorem:filtering}
    Let $        (\bm \theta_{t-1} | \bm y_{1:t-1}) \sim \mbox{SLCT-N}_{n(t-1),p} (\bm \mu_{z_{t-1|t-1}}, \bm \mu_{\theta_{t-1|t-1}},  \bm \Sigma_{z_{t-1|t-1}}, 
         \bm  \Sigma_{\theta_{t-1|t-1}},  
         \\ \bm  \Sigma_{(z \theta)_{t-1|t-1}},  C_{t-1|t-1})
$
   be the filtering distribution at time $t-1$ under the warpDLM. Then the one-step-ahead state predictive distribution at $t$ is 
   \begin{equation}
    \label{statepred}    
        (\bm \theta_{t} | \bm y_{1:t-1}) \sim \mbox{SLCT-N}_{n(t-1),p} (\bm \mu_{z_{t|t-1}}, \bm \mu_{\theta_{t|t-1}},  \bm \Sigma_{z_{t|t-1}}, 
        \bm  \Sigma_{\theta_{t|t-1}},  \bm  \Sigma_{(z \theta)_{t|t-1}},  C_{t|t-1})
    \end{equation}
    with  $\bm \mu_{z_{t|t-1}} = \bm \mu_{z_{t-1|t-1}}$, $\bm \mu_{\theta_{t|t-1}} = \bm G_t \bm \mu_{\theta_{t-1|t-1}}$, $\bm \Sigma_{z_{t|t-1}}= \bm \Sigma_{z_{t-1|t-1}}$, $\bm \Sigma_{\theta_{t|t-1}}= \bm G_t\bm \Sigma_{\theta_{t-1|t-1}}\bm G_t' + \bm W_t$, $\bm \Sigma_{(z\theta)_{t|t-1}}= \bm \Sigma_{(z\theta)_{t-1|t-1}}\bm G_t'$, and $C_{t|t-1}=C_{t-1|t-1}$.
    Furthermore, the filtering distribution at time $t$ is 
    \begin{equation}
    \label{filtering}    
        (\bm \theta_{t} | \bm y_{1:t}) \sim \mbox{SLCT-N}_{n(t),p} (\bm \mu_{z_{t|t}}, \bm \mu_{\theta_{t|t}},  \bm \Sigma_{z_{t|t}}, 
        \bm  \Sigma_{\theta_{t|t}},  \bm  \Sigma_{(z \theta)_{t|t}},  C_{t|t})
    \end{equation}
    with  $\bm \mu_{z_{t|t}} = \begin{pmatrix} \bm \mu_{z_{t|t-1}} \\ \bm F_t \bm \mu_{\theta_{t|t-1}} \end{pmatrix}$, 
    $\bm \mu_{\theta_{t|t}} = \bm \mu_{\theta_{t|t-1}}$, 
    $\bm \Sigma_{z_{t|t}}= \begin{pmatrix} \bm \Sigma_{z_{t|t-1}} & \bm \Sigma_{(z\theta)_{t|t-1}}\bm F_t' \\  \bm F_t \bm \Sigma_{(z\theta)_{t|t-1}}' & \bm F_t \bm \Sigma_{\theta_{t|t-1}}\bm F_t' + \bm V_t \end{pmatrix}$, 
    $\bm \Sigma_{\theta_{t|t}}= \bm \Sigma_{\theta_{t|t-1}}$, 
    $\bm \Sigma_{(z\theta)_{t|t}}= \begin{pmatrix} \bm \Sigma_{(z\theta)_{t|t-1}} \\ \bm F_t \bm \Sigma_{\theta_{t|t-1}} \end{pmatrix}$
    , and $C_{t|t}= C_{t|t-1} \times g(\mathcal{A}_{\bm y_t})$.
\end{theorem}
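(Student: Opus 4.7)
The proof will be an inductive argument that mirrors the standard Kalman filter recursion, but with the SLCT-N distribution replacing the Gaussian. The inductive step at time $t$ splits naturally into two substeps: (i) propagate the filtering distribution at $t-1$ through the state evolution equation \eqref{dlm-evol} to obtain the one-step-ahead predictive \eqref{statepred}, and (ii) condition on the new observation $\bm y_t$ via the latent equation \eqref{dlm-obs} to obtain the filtering distribution \eqref{filtering}. The base case $t=1$ is already handled by Theorem~\ref{normal-conjugacy}, as noted just before the statement.

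For the prediction substep, I will invoke closure of the selection normal under linear-plus-Gaussian transformations. By construction, the SLCT-N representation of $(\bm \theta_{t-1}\mid \bm y_{1:t-1})$ arises from a latent joint Gaussian $(\bm z_{*,t-1},\bm \theta_{t-1})$ truncated to $\bm z_{*,t-1}\in C_{t-1\mid t-1}$. Stacking this with the independent innovation $\bm w_t\sim N_p(\bm 0,\bm W_t)$ keeps everything jointly Gaussian, and $\bm \theta_t=\bm G_t\bm\theta_{t-1}+\bm w_t$ is a linear function of this joint. Hence the joint of $(\bm z_{*,t-1},\bm \theta_t)$ is Gaussian with
\[
\mathbb{E}[\bm\theta_t]=\bm G_t\bm \mu_{\theta_{t-1\mid t-1}},\quad
\mathrm{Cov}(\bm\theta_t)=\bm G_t\bm \Sigma_{\theta_{t-1\mid t-1}}\bm G_t'+\bm W_t,\quad
\mathrm{Cov}(\bm z_{*,t-1},\bm\theta_t)=\bm \Sigma_{(z\theta)_{t-1\mid t-1}}\bm G_t',
\]
while the $\bm z_{*,t-1}$ marginal and the constraint region $C_{t-1\mid t-1}$ are unchanged. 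Reading off the SLCT-N parameters yields exactly \eqref{statepred}.

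For the update substep, the predictive \eqref{statepred} plays the role of the prior in Lemma~\ref{lem:conjugacy} (with $n_0=n(t-1)$ and the subscript-$0$ quantities set to the ``$t\mid t-1$'' quantities), and the observation model \eqref{dlm-obs} at time $t$ supplies exactly the latent-data equation \eqref{eq:firststepsimple} with $\bm F=\bm F_t$ and $\bm V=\bm V_t$. Conditioning on $\bm y_t$ is equivalent to conditioning on $\bm z_t\in g(\mathcal{A}_{\bm y_t})$ by \eqref{likelihood-time}. Substituting directly into Lemma~\ref{lem:conjugacy} produces the block-stacked means, covariances, and cross-covariances displayed in \eqref{filtering}, and the constraint region becomes the Cartesian product $C_{t\mid t-1}\times g(\mathcal{A}_{\bm y_t})=C_{t\mid t}$, as claimed.

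The only nontrivial point is the prediction substep: I need to justify that propagating an SLCT-N through a linear-plus-Gaussian map simply updates the Gaussian parameters of the underlying joint while leaving the latent variable and its constraint region intact. This is the ``closure under linear transformation'' property invoked immediately after Lemma~\ref{lem:conjugacy}, but for the sake of self-containment I will write it out by augmenting the latent joint with $\bm w_t$, forming the linear map $(\bm z_{*,t-1},\bm\theta_{t-1},\bm w_t)\mapsto(\bm z_{*,t-1},\bm G_t\bm\theta_{t-1}+\bm w_t)$, and noting that conditioning on $\bm z_{*,t-1}\in C_{t-1\mid t-1}$ commutes with this transformation because the map acts trivially on the conditioning coordinate. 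All other computations are routine block-matrix algebra, and the induction closes.
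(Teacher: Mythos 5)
Your proposal is correct and follows essentially the same two-step route as the paper: the prediction step is the closure of the SLCT-N family under linear-plus-Gaussian maps (Proposition~\ref{prop:linearity} in the supplement, which the paper cites directly and you re-derive via the augmented joint Gaussian rather than the moment generating function), and the update step is a direct application of Lemma~\ref{lem:conjugacy} with the predictive distribution as the prior. No gaps.
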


The state predictive distribution updating in \eqref{statepred} is analogous to that of the Kalman filter, but the filtering distribution update in \eqref{filtering} has a different form. In the Kalman filter, the filtering distribution parameters are updated by taking the previous best estimate for the states (from the state predictive distribution) and ``correcting'' them using information gained from the new observation. Under the warpDLM, the previous parameters are not corrected; instead, the dimension of the new filtering distribution increases, with the latest data point controlling the bounds over which the latent state can vary. The growing dimension ensures we are retaining all the information about the states, although it also means that sampling can become demanding for longer time series (see Section \ref{sec:comp} for discussion). The selection region $C$ also grows in size with each update, but from a storage perspective this is not burdensome: the region $C_{t|t}$ adds an interval for each component of $\bm y_t$, which simply requires appending a row to two $(t-1) \times n$ matrices. Importantly, the parameter updates depend on the new observation only through $C$, which implies that the system matrices do not need to be updated recursively. This point is better clarified by considering the smoothing distribution.

\subsubsection{Smoothing}
\label{sec:smooth}
The joint smoothing distribution is $p(\bm \theta_{1:T}| \bm y_{1:T})$ where $T$ is the terminal time point. Most commonly, the smoothing distribution in a DLM is obtained by first filtering ``forward" recursively and then smoothing ```backward" in time  \citep{KalmanSmoother}. This strategy targets the marginal smoothing distributions $p(\bm \theta_t | \bm y_{1:T})$ and requires modifications to access the \emph{joint} smoothing  distribution (e.g., \citealp{durbin2002simple}). 

Within the warpDLM, we instead target the joint smoothing distribution directly and analytically. Crucially, this process does not require any preliminary passes through the data: all smoothing parameters can be constructed \emph{a priori} through matrix multiplications. Let $\bm G_{1:t}=\bm G_t \bm G_{t-1}\cdots \bm G_1$ and $
\bm \mu_\theta = (\bm G_{1:1} \bm a_0, \dotsi,  \bm G_{1:T} \bm a_0)$, 
a $pT$-dimensional vector.  Now, let $\bm R_t = \bm G_t \bm R_{t-1} \bm G_t' + \bm W_t$ for $t=1,\ldots, T$.  Then the covariance matrix of $\bm \theta$ is a $pT \times pT$ matrix with $p \times p$ diagonal block entries of 
$
\bm \Sigma_\theta[t,t] = \bm R_t = \bm G_{1:t} \bm R_0 \bm G_{1:t}' + \sum_{q=2}^t \bm G_{q:t} \bm W_{q-1} \bm G_{q:t}' + \bm W_t    
$
and cross covariance entries (for  $t>q$)
$
\bm \Sigma_\theta[t,q] ={\bm \Sigma_\theta[q,t]}^\top = \bm G_{(q+1):t} \bm \Sigma_\theta[q,q].$
We also define two block diagonal matrices, $\boldsymbol{\mathfrak{F}}$ and $\boldsymbol{\mathfrak{V}}$, with diagonal entries of $\bm F_t$ and $\bm V_t$, respectively, for $t=1,\dots, n$, so  $\boldsymbol{\mathfrak{F}}$ is  a $nT \times pT$ matrix and $\boldsymbol{\mathfrak{V}}$ is a $nT \times nT$ matrix. Finally, let $\mathcal{C}_{1:T} =(g(\mathcal{A}_{\bm y_1}), \dotsi,  g(\mathcal{A}_{\bm y_T}))$ be the matrix of selection intervals through time $T$. The joint smoothing distribution is given below.
\begin{theorem}
\label{theorem:smooth}
   Under the warpDLM, the joint smoothing distribution is 
    \begin{equation}
    \label{eq:smoothing}    
        (\bm \theta_{1:T}| \bm y_{1:T}) \sim \mbox{SLCT-N}_{nT,pT} \{\bm \mu_{z} {=} \boldsymbol{\mathfrak{F}} \bm \mu_\theta, \bm \mu_{\theta},  \bm \Sigma_{z} {=} \boldsymbol{\mathfrak{V}}+ \boldsymbol{\mathfrak{F}}\bm \Sigma_\theta  \boldsymbol{\mathfrak{F}}', \bm  \Sigma_{\theta}, 
          \bm  \Sigma_{z \theta} {=} \boldsymbol{\mathfrak{F}} \bm \Sigma_\theta,  \mathcal{C} {=} \mathcal{C}_{1:T} \}
    \end{equation}
\end{theorem}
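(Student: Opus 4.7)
The plan is to collapse the entire dynamic system into a single static selection-normal problem, so that Theorem~\ref{theorem:smooth} falls out from the joint Gaussianity of $(\bm z_{1:T},\bm\theta_{1:T})$ together with the warpDLM likelihood \eqref{likelihood-time}. Concretely, I would first argue that $\bm\theta_{1:T} \coloneqq (\bm\theta_1',\dots,\bm\theta_T')'$ has an unconditional Gaussian prior with mean $\bm\mu_\theta$ and covariance $\bm\Sigma_\theta$ exactly as defined in the paragraph preceding the theorem. This is a standard induction on $t$: iterating \eqref{dlm-evol} gives $\bm\theta_t = \bm G_{1:t}\bm\theta_0 + \sum_{q=1}^{t}\bm G_{(q+1):t}\bm w_q$ (with $\bm G_{(t+1):t}\coloneqq \bm I_p$), from which the marginal moments $\bm G_{1:t}\bm a_0$ and $\bm R_t$, and the cross covariance $\bm G_{(q+1):t}\bm R_q$ for $t>q$, follow by independence of the $\bm w_q$'s and $\bm\theta_0$.

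Next, stacking \eqref{dlm-obs} across $t$ yields the joint observation equation $\bm z_{1:T} = \boldsymbol{\mathfrak{F}}\,\bm\theta_{1:T} + \bm v$ with $\bm v \sim N_{nT}(\bm 0,\boldsymbol{\mathfrak{V}})$ independent of $\bm\theta_{1:T}$, since the $\bm v_t$ are mutually independent and also independent of the $\bm w_q$'s and $\bm\theta_0$. Combining with the prior gives the joint Gaussian
\begin{equation*}
\begin{pmatrix}\bm z_{1:T} \\ \bm\theta_{1:T}\end{pmatrix}
\sim N_{nT+pT}\!\left\{
\begin{pmatrix}\boldsymbol{\mathfrak{F}}\bm\mu_\theta \\ \bm\mu_\theta\end{pmatrix},
\begin{pmatrix}\boldsymbol{\mathfrak{F}}\bm\Sigma_\theta\boldsymbol{\mathfrak{F}}' + \boldsymbol{\mathfrak{V}} & \boldsymbol{\mathfrak{F}}\bm\Sigma_\theta \\ \bm\Sigma_\theta\boldsymbol{\mathfrak{F}}' & \bm\Sigma_\theta\end{pmatrix}
\right\},
\end{equation*}
whose parameters match the tuple $(\bm\mu_z,\bm\mu_\theta,\bm\Sigma_z,\bm\Sigma_\theta,\bm\Sigma_{z\theta})$ in \eqref{eq:smoothing}.

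Finally, the warpDLM likelihood \eqref{likelihood-time}, applied componentwise across $t$, says that conditioning on $\bm y_{1:T}$ is equivalent to the event $\{\bm z_t \in g(\mathcal{A}_{\bm y_t}) \text{ for all } t\}$, i.e., $\bm z_{1:T} \in \mathcal{C}_{1:T}$. Invoking the SLCT-N definition \eqref{density-slct-n} on the above joint Gaussian with constraint region $\mathcal{C}_{1:T}$ then produces exactly \eqref{eq:smoothing}, completing the argument.

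The only genuinely delicate step is the bookkeeping for $\bm\Sigma_\theta$: verifying the block structure $\bm\Sigma_\theta[t,q] = \bm G_{(q+1):t}\bm\Sigma_\theta[q,q]$ and the diagonal blocks $\bm R_t$ requires careful index management of the products $\bm G_{q:t}$, but is otherwise routine. An alternative route would be to iterate Lemma~\ref{lem:conjugacy} on the stacked state $\bm\theta_{1:t}$, appending one observation at a time, which reduces the problem to Theorem~\ref{theorem:filtering} applied to an augmented state space; I would favor the direct joint-Gaussian derivation above because it makes the ``no recursive updates needed'' point of the theorem transparent.
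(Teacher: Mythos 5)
Your proposal is correct and follows essentially the same route as the paper: both collapse the dynamic model into a single static selection problem by stacking states and observations, identifying the Gaussian prior moments $(\bm\mu_\theta,\bm\Sigma_\theta)$ and the constraint region $\mathcal{C}_{1:T}$. The paper phrases the final step via Bayes' rule---writing the likelihood as $\bar\Phi_{nT}(\mathcal{C};\boldsymbol{\mathfrak{F}}\bm\theta_{1:T},\boldsymbol{\mathfrak{V}})$ and recognizing the SLCT-N kernel---whereas you invoke the constructive selection definition on the joint Gaussian of $(\bm z_{1:T},\bm\theta_{1:T})$; these are the same computation.
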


These results can be applied to the time $s$ filtering  distribution $p(\bm \theta_{1:s}| \bm y_{1:s})$ to provide a more concise representation. Using the analogous smoothing parameters but applied up to time $s$ instead of $T$, 
we can rewrite the parameters of the filtering distribution as $\bm \mu_{z_{s|s}} = \boldsymbol{\mathfrak{F}} \bm \mu_{\theta} \; , \; \bm \mu_{\theta_{s|s}} = \bm G_1^s \bm a_0 \; , \; \bm  \Sigma_{z_{s|s}} = \boldsymbol{\mathfrak{V}}+ \boldsymbol{\mathfrak{F}}\bm \Sigma_\theta  \boldsymbol{\mathfrak{F}}' \;,\; \bm \Sigma_{(z \theta)_{s|s}} = \boldsymbol{\mathfrak{F}} \bm \Sigma_\theta \;,\; \bm \Sigma_{\theta_{s|s}} = \bm R_s.$
Although these covariance terms are quite complex in terms of notation, constructing such matrices is computationally straightforward via matrix multiplications and additions.

The marginal smoothing distribution $p(\bm \theta_{t}| \bm y_{1:T})$ at each time $t$ is readily available from Theorem~\ref{theorem:smooth}. In particular, the SLCT-N family is closed under marginalization, so parameters for the observations stay the same as in the joint case, and we simply pick off the correct block of the state parameters.  This is formalized in the corollary below.
\begin{corollary}
\label{corollary:marginalsmooth}
Under the warpDLM, the marginal smoothing distribution at time $t$ is 
\begin{equation}
    \label{eq:margsmooth}    
    \begin{aligned}
        (\bm \theta_{t}| \bm y_{1:T}) \sim \mbox{SLCT-N}_{nT,p} &\{\bm \mu_{z_{t|T}} = \boldsymbol{\mathfrak{F}} \bm \mu_\theta, \bm \mu_{\theta_{t|T}} = \bm \mu_\theta[t],  \bm \Sigma_{z_{t|T}} = \boldsymbol{\mathfrak{V}}+ \boldsymbol{\mathfrak{F}}\bm \Sigma_\theta  \boldsymbol{\mathfrak{F}}', \bm  \Sigma_{\theta_{t|T}} = \bm \Sigma_{\theta}[t,t],\\
        &  \bm  \Sigma_{(z \theta)_{t|T}} =\bm \Sigma_{z \theta}[,t],  \mathcal{C} = \mathcal{C}_{1:T} \}\;.
    \end{aligned}
    \end{equation}
    where $\bm \Sigma_{z \theta}[,t]$ refers to the $t$-th block of $p$ columns in $\bm \Sigma_{z \theta}$.
\end{corollary}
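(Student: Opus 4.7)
The plan is to derive the marginal smoothing distribution by directly marginalizing the joint smoothing distribution given in Theorem~\ref{theorem:smooth}, leveraging the fact that the SLCT-N family is closed under marginalization of the conditioned variable. This closure property follows immediately from the constructive representation: if $(\bm z, \bm \theta_{1:T})$ is jointly Gaussian and we define $[\bm \theta_{1:T} \mid \bm z \in \mathcal{C}] \sim \mbox{SLCT-N}$, then for any index $t$, the marginal $[\bm \theta_t \mid \bm z \in \mathcal{C}]$ is obtained by taking the appropriate marginal of the underlying Gaussian pair $(\bm z, \bm \theta_t)$, since marginalizing over $\bm \theta_{-t}$ commutes with the event $\{\bm z \in \mathcal{C}\}$ (which does not involve $\bm \theta$).

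First, I would restate the joint smoothing distribution from Theorem~\ref{theorem:smooth} in terms of an underlying joint Gaussian representation of $(\bm z_{1:T}, \bm \theta_{1:T})$ with mean $(\boldsymbol{\mathfrak{F}} \bm \mu_\theta, \bm \mu_\theta)$ and the appropriate block covariance structure, conditioned on $\bm z_{1:T} \in \mathcal{C}_{1:T}$. Next, I would marginalize the joint Gaussian distribution over $\bm \theta_s$ for all $s \ne t$ to obtain the joint Gaussian distribution of $(\bm z_{1:T}, \bm \theta_t)$. By standard properties of the multivariate normal, this marginal retains the mean vector $\boldsymbol{\mathfrak{F}} \bm \mu_\theta$ and covariance $\boldsymbol{\mathfrak{V}} + \boldsymbol{\mathfrak{F}} \bm \Sigma_\theta \boldsymbol{\mathfrak{F}}'$ for $\bm z_{1:T}$, while the parameters for $\bm \theta_t$ are read off as the $t$-th block of $\bm \mu_\theta$ and the $(t,t)$ diagonal block $\bm \Sigma_\theta[t,t]$ of $\bm \Sigma_\theta$, with cross-covariance equal to the $t$-th block of columns of $\bm \Sigma_{z\theta} = \boldsymbol{\mathfrak{F}} \bm \Sigma_\theta$.

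Finally, applying the SLCT-N definition to this reduced Gaussian pair, again with constraint $\mathcal{C}_{1:T}$, yields exactly the density in \eqref{eq:margsmooth}. The dimension of the observation component remains $nT$ (since the constraint region is unchanged), while the dimension of the state component drops from $pT$ to $p$, matching the advertised $\mbox{SLCT-N}_{nT,p}$ specification.

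The argument is essentially bookkeeping — the only substantive content is the closure-under-marginalization property of SLCT-N, which is established in \citet{ArellanoValle2006}. The mildly tedious part is verifying that the block extraction of $\bm \Sigma_\theta$ and $\bm \Sigma_{z\theta}$ correctly aligns with the indexing conventions used in Theorem~\ref{theorem:smooth}; this is straightforward provided one is careful to distinguish the $t$-th $p \times p$ diagonal block of $\bm \Sigma_\theta$ from the $t$-th block of $p$ columns of $\bm \Sigma_{z\theta}$, as noted in the corollary statement.
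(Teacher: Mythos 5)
Your proposal is correct and matches the paper's own (very brief) justification: the paper simply invokes closure of the SLCT-N family under marginalization---since the selection event $\{\bm z \in \mathcal{C}_{1:T}\}$ involves only $\bm z$, integrating out $\bm \theta_s$ for $s \neq t$ reduces to taking the Gaussian marginal of $(\bm z_{1:T}, \bm \theta_t)$---and then reads off the $t$-th blocks of $\bm \mu_\theta$, $\bm \Sigma_\theta$, and $\bm \Sigma_{z\theta}$ exactly as you describe. No gaps; your write-up just makes the commutation argument explicit where the paper cites \citet{ArellanoValle2006}.
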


The joint smoothing distribution also enables direct computation of the warpDLM marginal likelihood: 
\begin{corollary}
\label{corollary:marginallike}
Under the warpDLM, the marginal likelihood is 
\begin{equation}
    \label{eq:marglike}    
    \begin{aligned}
        p(\bm y_{1:T}) = \Bar{\Phi}_{nT}(\mathcal{C}_{1:T}; \bm \mu_{z} = \boldsymbol{\mathfrak{F}} \bm \mu_\theta, \bm \Sigma_{z} = \boldsymbol{\mathfrak{V}}+ \boldsymbol{\mathfrak{F}}\bm \Sigma_\theta  \boldsymbol{\mathfrak{F}}') \;.
    \end{aligned}
    \end{equation}
\end{corollary}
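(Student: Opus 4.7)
The plan is to obtain the marginal likelihood as the normalizing constant that already appears in the joint smoothing distribution of Theorem~\ref{theorem:smooth}. The key observation is that the warping operation \eqref{r-t} makes the event $\{\bm y_{1:T}\}$ deterministically equivalent to the event $\{\bm z_{1:T} \in \mathcal{C}_{1:T}\}$, so $p(\bm y_{1:T}) = \mathbb{P}(\bm z_{1:T} \in \mathcal{C}_{1:T})$ under the prior-predictive (i.e., $\bm \theta$-marginal) distribution of $\bm z_{1:T}$.

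First, I would compute this prior-predictive distribution explicitly. Stacking the observation and evolution equations gives $\bm z_{1:T} = \boldsymbol{\mathfrak{F}}\bm \theta_{1:T} + \bm v_{1:T}$ with $\bm v_{1:T} \sim N_{nT}(\bm 0, \boldsymbol{\mathfrak{V}})$ independent of $\bm \theta_{1:T} \sim N_{pT}(\bm \mu_\theta, \bm \Sigma_\theta)$, where $\bm \mu_\theta$ and $\bm \Sigma_\theta$ are precisely the stacked-state mean and covariance already constructed in Section~\ref{sec:smooth}. Marginalizing over $\bm \theta_{1:T}$, the random vector $\bm z_{1:T}$ is Gaussian with mean $\boldsymbol{\mathfrak{F}}\bm \mu_\theta$ and covariance $\boldsymbol{\mathfrak{V}} + \boldsymbol{\mathfrak{F}}\bm \Sigma_\theta \boldsymbol{\mathfrak{F}}'$; these coincide with $\bm \mu_z$ and $\bm \Sigma_z$ as identified in \eqref{eq:smoothing}.

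Second, by the definition $\bar\Phi_n(\mathcal{C}; \bm \mu, \bm \Sigma) = \int_\mathcal{C} \phi_n(\bm x; \bm \mu, \bm \Sigma)\, d\bm x$ introduced after \eqref{density-slct-n}, the probability $\mathbb{P}(\bm z_{1:T} \in \mathcal{C}_{1:T})$ is exactly $\bar\Phi_{nT}(\mathcal{C}_{1:T}; \boldsymbol{\mathfrak{F}}\bm \mu_\theta, \boldsymbol{\mathfrak{V}} + \boldsymbol{\mathfrak{F}}\bm \Sigma_\theta \boldsymbol{\mathfrak{F}}')$, yielding \eqref{eq:marglike}. Equivalently, one can read the result off of \eqref{density-slct-n} itself: comparing the SLCT-N density of Theorem~\ref{theorem:smooth} to the Bayes identity $p(\bm \theta_{1:T} | \bm y_{1:T}) = p(\bm \theta_{1:T})\, p(\bm y_{1:T} | \bm \theta_{1:T}) / p(\bm y_{1:T})$ shows that the denominator $\bar\Phi_{nT}(\mathcal{C}; \bm \mu_z, \bm \Sigma_z)$ is exactly the marginal likelihood.

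There is no real obstacle here beyond bookkeeping: everything required is already assembled in Theorem~\ref{theorem:smooth}. The only point that deserves a sentence of care is verifying that the covariance derived by direct marginalization agrees with the $\bm \Sigma_z$ of Theorem~\ref{theorem:smooth}, which follows immediately from the independence of $\bm v_{1:T}$ and $\bm \theta_{1:T}$ and the block-diagonal structure of $\boldsymbol{\mathfrak{V}}$ and $\boldsymbol{\mathfrak{F}}$.
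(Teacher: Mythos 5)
Your proposal is correct and matches the paper's argument: the paper's proof is exactly your second route, identifying $p(\bm y_{1:T})$ as the normalizing constant of the joint smoothing distribution in Theorem~\ref{theorem:smooth}, i.e., the denominator $\bar\Phi_{nT}(\mathcal{C}_{1:T}; \bm \mu_z, \bm \Sigma_z)$ of the SLCT-N density \eqref{density-slct-n}. Your additional direct marginalization of $\bm z_{1:T}$ is a harmless (and correct) verification of the same fact, not a different method.
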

The marginal likelihood  is useful for marginal maximum likelihood estimation of the variance parameters in \eqref{dlm-obs}--\eqref{dlm-evol} and other model comparison metrics.

\subsubsection{Forecasting}
\label{subsubsec:forecast}
A primary advantage  of the warpDLM is  the availability of discrete and correctly-supported forecasting distributions. 
Let $\bm \mu_{z_{t+1|t+1}}$ and $ \bm \Sigma_{z_{t+1|t+1}}$ denote parameters of  $p(\bm y_{1:(t+1)})$ analogous to \eqref{eq:marglike}, and let  $\mathcal{C}_{1:t}$ be the vector of selection intervals up to the known time $t$.  Then the one step forecasting distribution is
\begin{equation}
p(\bm y_{t+1} | \bm y_{1:t}) = \frac{p(\bm y_{1:(t+1)})}{p(\bm y_{1:t})} = \frac{\Bar{\Phi}_{n(t+1)}\{\mathcal{C}_{1:t} \times g(\mathcal{A}_{y_{t+1}}); \bm \mu_{z_{t+1|t+1}}, \bm \Sigma_{z_{t+1|t+1}}\}}{\Bar{\Phi}_{nt}(\mathcal{C}_{1:t}; \bm \mu_{z_{t|t}}, \bm \Sigma_{z_{t|t}})} \;.
\end{equation}
The $h$-step ahead forecasting distribution can be defined similarly. In most applications, the forecasting distribution is defined over all non-negative integers, and thus evaluating it at every point is not possible.  However, in practice most time series of counts take values in a small range (especially locally), and thus the forecasting distribution only has significant mass on a small range of values. For short to medium time series applications, evaluating these probabilities is quite fast and does not pose a computational burden.  

More directly, it is straightforward to simulate from the forecasting distribution given draws from the filtering distribution. Specifically, let $\bm \theta_t^* \sim p(\bm \theta_{t} | \bm y_{1:t})$ denote a draw from the filtering distribution. By passing this draw through the DLM, the evolution equation \eqref{dlm-evol} samples $\bm\theta_{t+1}^{**} \sim N_p(\bm G_{t+1} \bm\theta_{t}^*, \bm W_{t+1}) $   and the observation equation \eqref{dlm-obs} samples $\bm z_{t+1}^* \sim N_n(\bm F_t \bm\theta_{t+1}^{**}, \bm V_{t+1})$, which equivalently yields a draw from $(\bm z_{t+1} | \bm y_{1:t})$. By retaining $ \bm{\tilde y}_{t+1}  = h \circ  g^{-1}(\bm z_{t+1}^*)$, we obtain a draw from the one-step forecasting distribution of the warpDLM; multi-step forecasting proceeds similarly. Hence, the primary computational cost is the draw from the filtering distribution, which is detailed in Section~\ref{sec:comp}.

\subsubsection{Missing Data}
The warpDLM seamlessly handles missing data in any component of $\bm y_t$ at any time $t$. Notably, the  filtering, smoothing, and forecasting distributions depend on $\bm y_t$ only through the implied constraint region (i.e., the collection of intervals). Hence, the absence of an observation $y_{jt}$ implies that the interval, such as $[g(y_{jt}), g(y_{jt} + 1))$, is simply replaced by the interval $(-\infty, \infty)$: there is no information about $y_{jt}$ and therefore no constraint  for the latent $z_{jt}$. As a result, the modifications  for the  filtering, smoothing, and forecasting distributions in the presence of missingness are trivial both analytically and computationally. This result also suggests that the warpDLM is readily applicable to discrete time series data observed at mixed frequencies.

\section{Computing}
\label{sec:comp}
We develop computing strategies for offline and online inference and forecasting. For offline inference, we propose direct Monte Carlo sampling from the relevant filtering or smoothing distributions. We also introduce a Gibbs  sampler for offline inference, which decreases raw computing time for large $T$, but sacrifices some  Monte Carlo efficiency due to the autocorrelated samples. 
Online inference is enabled by an optimal particle filter. The \emph{Monte  Carlo} sampler and the \emph{optimal}  particle filter are uniquely available as a consequence of the results in Section~\ref{sec:theory}. 

\subsection{Direct Monte Carlo Sampling}
\label{subsec:direct}
The analytic filtering and smoothing distributions for the warpDLM unlock the potential for direct Monte Carlo sampling. This task requires sampling from a selection normal distribution. Crucially, the selection normal distribution admits a constructive representation using  a multivariate normal distribution and a multivariate truncated normal distribution \citep{ArellanoValle2006}, which is converted to a sampler in Algorithm~\ref{algo:slctndraw}. We then apply Algorithm~\ref{algo:slctndraw} to obtain Monte Carlo draws from the warpDLM state predictive distribution \eqref{statepred}, filtering distribution \eqref{filtering}, joint smoothing distribution \eqref{eq:smoothing}, marginal smoothing distribution \eqref{eq:margsmooth}, or forecasting distribution (Section~\ref{subsubsec:forecast}). This computing strategy avoids the need for MCMC, which often requires lengthy simulation runs and various diagnostics. 

\begin{algorithm}
\SetAlgoLined
\KwResult{One sample $\bm \theta$ from SLCT-N distribution}
Given $[\bm \theta ] \sim \mbox{SLCT-N}_{d_1, d_2}(\bm \mu_z, \bm \mu_\theta, \bm \Sigma_z, \bm \Sigma_\theta, \bm \Sigma_{z\theta},  \mathcal{C})$\;
\begin{enumerate}
    \item \textbf{Sample truncated multivariate normal}: Sample $\bm V_0$ from $N_{d_1}(\bm 0, \bm \Sigma_Z)$ truncated to region $\mathcal{C}-\bm \mu_z$
    \item \textbf{Sample multivariate normal}: Sample $\bm V_1$ from $N_{d_2}(\bm 0, \bm \Sigma_\theta - \bm \Sigma_{z\theta}'\bm\Sigma_z^{-1} \bm \Sigma_{z\theta})$
    \item \textbf{Combine results}: Compute $\bm \theta = \bm \mu_\theta +  \bm V_1 + \bm\Sigma_{z\theta}'\bm\Sigma_z^{-1} \bm V_0$ 
\end{enumerate}
 \caption{Sampling from Selection Normal Distribution}
 \label{algo:slctndraw}
\end{algorithm}

These distributions treat the variance components $\bm V_t$ and $\bm W_t$ as fixed and known. Unknown variance components may be accommodated using point estimates (e.g., marginal  maximum likelihood estimators using \eqref{eq:marglike}) or within a blocked Gibbs sampler that iterates draws of the state parameters (Algorithm~\ref{algo:slctndraw}) and the variance components.

The computational bottleneck  of the Monte Carlo sampler is the draw (a) from the multivariate truncated normal.  Efficient sampling  from multivariate truncated normal distributions is an active area of research; we apply the minimax tilting strategy from  \cite{Botev2017}, which is implemented in the R package \verb|TruncatedNormal|. This method is highly efficient and accurate for dimensions up to about $d_1 \approx 500$. For the warpDLM, the dimension $d_1$ corresponds to $nt$ for the filtering distribution at time $t$ or $nT$ for the smoothing distributions. Hence, when the combined data dimension ($nT$) is very large, other computational approaches are needed; we provide these in the subsequent sections.

\subsection{Gibbs sampling}
For lengthy time series, we develop a Gibbs sampler for offline inference that circumvents the computational bottlenecks of direct Monte Carlo sampling. The Gibbs sampler iterates  between a \emph{latent  data augmentation}  draw and a \emph{state parameter} draw (plus the variance components $\bm \psi$, if unknown). This is represented in Algorithm \ref{algo:gibbs}. Although not shown, one can also easily sample from the posterior predictive and forecast distributions for $\bm z_t$ in each pass. Draws from the corresponding distributions for $\bm y_t$ are given by computing $h\left(g^{-1}(\bm z_t)\right)$.
\begin{algorithm}
\begin{enumerate}
    \item \textbf{Sample the latent data}: draw $[\bm z_t| \{ \bm y_t, \bm \theta_t\}_{t=1}^T, \psi]$ from $N_n(\bm F_t \bm \theta_t, \bm V_t)$ truncated to $\bm z_t \in g(\mathcal{A}_{y_t})$ for $t=1,\ldots,T$
    \item \textbf{Sample the states}: draw $[\bm \theta_{1:T}|\bm z_{1:T}, \bm \psi]$ 
    \item \textbf{Sample the parameters}: draw $[\bm \psi| \{ \bm z_t, \bm \theta_t\}_{t=1}^T]$ (from the appropriate full conditionals depending on priors)
\end{enumerate}
\caption{Gibbs Sampler for WarpDLM}
\label{algo:gibbs}
\end{algorithm}

Crucially, the first step of latent data augmentation can be decomposed into $T$ \emph{independent} draws from an $n$-dimensional truncated normal distribution; hence the Gibbs sampling approach doesn't suffer from the same dimensionality problems as the direct sampling method.  Furthermore, the latent state parameter full conditional in step 2 is exactly the smoothing distribution for a Gaussian DLM, and we can thus rely on already-developed Kalman filter-based simulation techniques such as forward filtering backward sampling (FFBS; \citealp{fruhwirth1994data, carter1994gibbs}) or the simulation smoother \citep{durbin2002simple}. Importantly, these sampling steps are \emph{linear} in the length of  the time series $T$. Additionally, both methods have R implementations which can be readily leveraged: the package \verb|dlm| \citep{petrisDLMPackage} performs FFBS and the \verb|KFAS| package \citep{HelskeKFAS} applies the simulation smoother.

In a fully Bayesian framework, we place priors on all variances, and sample from the appropriate full conditionals in step 3.  For univariate models, we put a Uniform$(0,A)$ prior with $A$ large on all standard deviations, leading to relatively simple Gibbs updates. In the multivariate context, we place independent inverse-Wishart priors on variance matrices as in \cite{petrisDLM}.

The sampler for the warpDLM framework is similar to that developed in \cite{Kowal2020a} for the static count regression scenario. The Bayesian dynamic probit sampler \citep{albert1993bayes,Fasano2021} can be regarded as a special case.  In particular, if our data $y_t$ were binary, we could let our transformation $g$ be the identity while defining our rounding operator such that $\mathcal{A}_{y_t=0}=(-\infty,0)$ and $\mathcal{A}_{y_t=1}=(0, \infty)$, thus recovering the probit Gibbs sampling algorithm.

Overall, the Gibbs sampler provides scalable offline inference for the warpDLM in the case  of large $T$, although at the expense of Monte Carlo efficiency; naturally, the direct  Monte Carlo sampler (Algorithm~\ref{algo:slctndraw}) is preferred when the data dimension permits.

\subsection{Optimal Particle Filtering}
\label{subsec:particlefilter}
We design an \emph{optimal} particle filter for online inference with the warpDLM. The key sampling steps  do \emph{not} depend on $T$, which provides scalability for lengthy time series. Online particle filtering algorithms  for state space models usually apply Sequential Importance Sampling with or without resampling  \citep{doucet2000sequential}.  For each time point, sample trajectories or particles are drawn from an importance function, and each particle has an associated importance weight.  For the resampling step, draws from the filtering distribution are obtained via a weighted resampling from the previously drawn particles.  The purpose of this step is to slow the effect of particle degeneration.  

A well-known drawback of particle filtering methods is that the variance of the importance weights grows over time, and thus the trajectories will degenerate, eventually leaving only one particle with nonzero weight.  To avoid quick degeneration, it is important to choose a good importance function; the \emph{optimal} function is that which minimizes the conditional variance of weights. Applying  \cite{doucet2000sequential} to the warpDLM, the optimal importance function is $p(\bm \theta_{t}| \bm \theta_{t-1}, \bm y_{t})$ with weights given by $p(\bm y_{t}| \bm \theta_{t-1})$. For many state space models, this function is not known analytically; however, our results from Section~\ref{sec:theory} produce an exact form for the warpDLM, formalized in the following corollary.

\begin{corollary} 
\label{corollary:pfresults}
Under the warpDLM, we have 
\begin{equation}
\label{eq:importance}
    \begin{aligned}
       (\bm \theta_{t}| \bm \theta_{t-1}, \bm y_{t}) \sim \mbox{SLCT-N}_{n,p} &(\bm \mu_{z} = \bm F_t \bm G_t \bm \theta_{t-1}, \bm \mu_{\theta} = \bm G_t \bm \theta_{t-1},  \bm \Sigma_{z} = \bm V_t+ \bm F_t \bm W_t \bm F_t', \\
        &\bm  \Sigma_{\theta} = \bm W_t,  \bm  \Sigma_{z \theta} = \bm F_t \bm W_t,  C=g(\mathcal{A}_{ \bm y_t}) )
    \end{aligned}
\end{equation}
and 
\begin{equation}
\label{eq:weight}
    \begin{aligned}
       p(\bm y_{t}| \bm \theta_{t-1}) = \Bar{\Phi}_n(g(\mathcal{A}_{y_t}); \bm \mu_{z} = \bm F_t \bm G_t \bm \theta_{t-1}, \bm \Sigma_{z}= \bm V_t+ \bm F_t \bm W_t \bm F_t') 
    \end{aligned}
    \end{equation}
\end{corollary}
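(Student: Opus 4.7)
}

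The plan is to recognize that, conditional on $\bm \theta_{t-1}$, the one-step warpDLM transition is structurally identical to the single-step model in Theorem~\ref{normal-conjugacy}, so both claims follow by a direct substitution rather than a fresh derivation. First I would use the evolution equation \eqref{dlm-evol} to obtain the implied ``prior'' $\bm \theta_t \mid \bm \theta_{t-1} \sim N_p(\bm G_t \bm \theta_{t-1}, \bm W_t)$, independent of the observation noise $\bm v_t$. Combined with the latent observation equation $\bm z_t = \bm F_t \bm \theta_t + \bm v_t$ and $\bm v_t \sim N_n(\bm 0, \bm V_t)$, this puts us in the exact setting of Theorem~\ref{normal-conjugacy} with the identifications $\bm F \leftarrow \bm F_t$, $\bm V \leftarrow \bm V_t$, $\bm \mu_\theta \leftarrow \bm G_t \bm \theta_{t-1}$, and $\bm \Sigma_\theta \leftarrow \bm W_t$. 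Because conditioning on $\bm y_t$ in the warpDLM is equivalent to conditioning on $\bm z_t \in g(\mathcal{A}_{\bm y_t})$ via \eqref{likelihood-time}, Theorem~\ref{normal-conjugacy} immediately yields the SLCT-N parameters stated in \eqref{eq:importance}; checking each parameter reduces to evaluating $\bm F \bm \mu_\theta$, $\bm F \bm \Sigma_\theta \bm F' + \bm V$, and $\bm F \bm \Sigma_\theta$ under the substitutions.

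For the weight \eqref{eq:weight}, I would argue from the marginal distribution of $\bm z_t \mid \bm \theta_{t-1}$. Substituting the evolution equation into the observation equation gives $\bm z_t = \bm F_t \bm G_t \bm \theta_{t-1} + \bm F_t \bm w_t + \bm v_t$, so that $\bm z_t \mid \bm \theta_{t-1} \sim N_n(\bm F_t \bm G_t \bm \theta_{t-1}, \bm F_t \bm W_t \bm F_t' + \bm V_t)$ by independence of $\bm w_t$ and $\bm v_t$. Then
\[
p(\bm y_t \mid \bm \theta_{t-1}) = \mathbb{P}\{\bm z_t \in g(\mathcal{A}_{\bm y_t}) \mid \bm \theta_{t-1}\} = \int_{g(\mathcal{A}_{\bm y_t})} \phi_n(\bm x; \bm F_t \bm G_t \bm \theta_{t-1}, \bm F_t \bm W_t \bm F_t' + \bm V_t)\, d\bm x,
\]
which by the definition of $\bar\Phi_n$ given after \eqref{density-slct-n} is exactly the stated expression. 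A consistency check is that this is precisely the normalizing constant appearing in the denominator of the SLCT-N density \eqref{density-slct-n} for the distribution in \eqref{eq:importance}, which is reassuring since $p(\bm \theta_t \mid \bm \theta_{t-1}, \bm y_t) = p(\bm \theta_t, \bm y_t \mid \bm \theta_{t-1}) / p(\bm y_t \mid \bm \theta_{t-1})$.

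There is no genuine obstacle here: the corollary is a translation of Theorem~\ref{normal-conjugacy} to a conditional setting, together with a standard linear-Gaussian marginalization. The only subtlety worth stating explicitly in the write-up is justifying that the evolution noise $\bm w_t$ and the observation noise $\bm v_t$ remain independent of each other and of $\bm \theta_{t-1}$ after conditioning, which follows from the mutual independence assumption in \eqref{dlm-obs}--\eqref{dlm-evol}. This independence is what permits the clean marginal form for $\bm z_t \mid \bm \theta_{t-1}$ and lets Theorem~\ref{normal-conjugacy} be invoked verbatim.
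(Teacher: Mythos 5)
Your proposal is correct and follows essentially the same route as the paper: both rest on writing the conditional prior $\bm\theta_t\mid\bm\theta_{t-1}\sim N_p(\bm G_t\bm\theta_{t-1},\bm W_t)$, identifying the posterior $p(\bm\theta_t\mid\bm\theta_{t-1},\bm y_t)\propto p(\bm y_t\mid\bm\theta_t)\,p(\bm\theta_t\mid\bm\theta_{t-1})$ as a SLCT-N kernel, and reading off the weight as its normalizing constant. The only cosmetic difference is that you invoke Theorem~\ref{normal-conjugacy} with the substitutions $\bm\mu_\theta\leftarrow\bm G_t\bm\theta_{t-1}$, $\bm\Sigma_\theta\leftarrow\bm W_t$, whereas the paper writes out Bayes' theorem explicitly; the underlying computation is identical.
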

We apply Corollary~\ref{corollary:pfresults} to design an \emph{optimal} particle filter for the warpDLM  in  Algorithm \ref{algo:particle}. Crucially, the draw from the selection normal in \eqref{eq:importance} is not dependent on the length of the time series, which provides scalability in $T$.

\begin{algorithm}
\SetAlgoLined
\KwResult{Draws ($\bm \theta_t^{(1)},\ldots,\bm \theta_t^{(S)}$) from filtering distribution $p(\bm \theta_{t}| \bm y_{1:t})$ for $t=1,\ldots,T$}
Given $S$ particle trajectories at time $t=0$ denoted by $\bm \theta_0^{(s)}$ \;
\For{$t\leftarrow 1$ \KwTo $T$}{
    \For{$s\leftarrow 1$ \KwTo $S$}{
        Sample $\Tilde{\bm \theta}_t^{(s)}$ from \eqref{eq:importance} conditional on $\bm \theta_{t-1} = \bm \theta_{t-1}^{(s)}$ \;
        Calculate weight up to normalizing constant using \eqref{eq:weight}: $w_t^{(s)} \propto p(\bm y_{t}| \bm \theta_{t-1}^{(s)})$
    }
    Normalize importance weights: $\Tilde{w_t}^{(s)} = \frac{w_t^{(s)}}{\sum_{s=1}^S w_t^{(s)}}$ \;
    Obtain final draws $\bm \theta_t^{(1)},\ldots,\bm \theta_t^{(S)}$ by resampling $ \Tilde{ \bm \theta}_t^{(1)},\ldots, \Tilde{ \bm \theta_t}^{(S)}$ with weights $\Tilde{w_t}^{(s)}$
 }
 \caption{Optimal Particle Filter for warpDLM}
 \label{algo:particle}
\end{algorithm}


The (unnormalized) weights can be used to estimate the marginal likelihood $p(\bm y_{1:T})$:
\begin{equation}
\hat{p}(\bm y_{1:T}) \coloneqq \hat{p}(\bm y_{1}) \prod_{t=2}^T \hat{p}(\bm y_{t}|\bm y_{1:t-1}) 
\end{equation}
where $\hat{p}(\bm y_{1})=\frac{1}{S}\sum_{s=1}^S w_1^{(s)}$ and $\hat{p}(\bm y_{t}|\bm y_{1:t-1}) = \frac{1}{S}\sum_{s=1}^S w_t^{(s)}$. As with equation \eqref{eq:marglike}, this can be used for model comparisons or estimation of unknown parameters \citep{ParticleParameter2015}. Such an approximation is also key when designing more complex sequential Monte Carlo schemes or sampling based on particle methods \citep{Andrieu_Doucet_Holenstein_2010}.

As both the cross-section dimension $n$ and the length of the time series $T$ increase, we expect that the quality of the particles will decline, despite the optimality of our transition density. This effect is typically more pronounced for the smoothing distribution than the filtering distribution. 
In that case, we may seek to improve the particle filter (Algorithm~\ref{algo:particle}) by applying relevant techniques from sequential Monte Carlo sampling, such as a mutation step, or resort to particle smoothing \citep{doucet2009tutorial} when $T$ is especially large. However, we show empirically that the proposed particle filter  is capable of accurate inference for $n=5$ and $T=4000$ (see the supplementary material).


\section{Warped DLMs in Action}
\label{sec:application}
In this section, we showcase several facets of the warpDLM framework.  We first perform a forecasting analysis on simulated data, showing that the warpDLM offers better distributional forecasting than competing Bayesian methods.  We then apply the warpDLM to model a bivariate time series of drug overdose counts  in Cincinnati, and highlight both the offline and online computing capabilities.


\subsection{Evaluating Probabilistic Forecasts}
A prominent advantage of the warpDLM is that it provides a forecasting \emph{distribution} that matches the (discrete) support of the data. Distributional forecasts are especially important for discrete data, whereas point forecasts are less informative. Evaluation of probabilistic forecasts should first and foremost verify  that forecast distributions are \textit{calibrated} in the sense that observed values are consistent with samples from the distribution \citep{czado2009predictive}. To assess calibration, we apply the randomized Probability Integral Transform (rPIT). Let $H_t$ denote the true (discrete)  cumulative forecasting distribution, from which a count $y_t$ is observed.  Each model produces an estimate $\hat{H}_t$ of the forecasting distribution. The rPIT is drawn from
$\Tilde{p}_t \stackrel{iid}{\sim} \mbox{Uniform}[\hat{H}_t(y_t -1), \hat{H}_t(y_t)].
$
If the model is well-calibrated, then $\hat{H}_t \approx H_t$ and consequently $\Tilde{p}_t \stackrel{iid}{\sim} \mbox{Uniform}(0,1)$. Thus, forecasts  are evaluated by (i) drawing $\Tilde{p}_t$ for each $t$ and (ii) comparing the draws to a standard uniform distribution. 

Among calibrated forecasts, we evaluate the \textit{sharpness} of the forecasting distribution, such as the tightness of the prediction intervals.  \cite{czado2009predictive} propose to evaluate the sharpness via scoring rules, in particular the logarithmic scoring rule 
$-\log(p_x)$ where $p_x$ is the probability mass of the predictive distribution at the observed count $x$.  We will utilize these diagnostics to evaluate various warpDLMs and DGLM competitors.

\subsection{Forecasting on Simulated Data}
\label{subsec:sim}
Existing models for count time series are often unable to capture multiple challenging discrete distributional features, such as zero-inflation, boundedness, and heaping. By comparison, the warpDLM is particularly well-suited for these features. At the same time, the warpDLM is sufficiently flexible to model simpler count data settings for which existing models have been designed, such as Poisson data.  To showcase this breadth of modeling, we designed simulation experiments for both scenarios and compared the forecasting performance of various warpDLMs against Bayesian competitors.

In the first experiment, we simulated zero-inflated and bounded count time series data. One real-world example of such data is analyzed in \cite{Ensor_Ray_Charlton_2014}, where the quantity of interest is the number of hours per day which exceed a certain pollutant threshold. Many days show no pollution, leading to zero-inflation, but occasionally every hour in a day exceeds the threshold, resulting in heaped values at the upper bound of 24. Specifically, we sampled from a zero-inflated Poisson with time varying mean $\lambda_t$, where $\lambda_1 \sim \mbox{Uniform}(5, 15)$ and $\lambda_{t+1} \sim N(\lambda_t, 0.2)$ for $t=2,\ldots,T$. The zero-inflation component was drawn uniformly from $[0.1, 0.3]$. Any sampled value above 24 was rounded down to the bound. 

In the second scenario, we simulated data from the INGARCH class using the R package \verb|tscount| \citep{tscountVig}.  An INGARCH($p,q$) model assumes $y_t \sim \mbox{Poisson}(\lambda_t)$ with
$\lambda_t = \beta_0 + \sum_{k=1}^p \beta_k Y_{t-k} +  \sum_{\ell=1}^q \alpha_\ell \lambda_{t-\ell}.$ 
In our simulation, we chose relatively simple dynamics of $p=1$ and $q=1$ and set $\beta_0=0.3$, $\beta_1 = 0.6$, and $\alpha_1=0.2$, with coefficients selected such that simulated series were low-count time series.  

For both simulation scenarios, 30 time series of length $T=200$ were simulated. Among all competing state space models, we applied the same evolution equation: a local level DLM with the univariate state $\theta_t$ and the time-invariant system $\{1, 1, V, W\}$ with initial state $\theta_0 \sim N(0,3)$. The warpDLMs included \emph{parametric} transformations (identity and square-root) and the \emph{nonparametric} transformation \eqref{g-hat}. In each case, the variance components $V$ and $W$ were estimated via maximization of the marginal likelihood \eqref{eq:marglike} and then Monte Carlo samples from the joint smoothing distribution \eqref{eq:smoothing} were drawn using Algorithm \ref{algo:slctndraw}. These draws were used to simulate from the forecasting distribution, as outlined in Section~\ref{subsubsec:forecast}. Competing methods included Poisson, negative binomial, and (for the bounded data scenario) binomial DGLMs, as implemented in the R package \verb|KFAS| \citep{HelskeKFAS}. We estimated variance components with maximum likelihood and used the \verb|KFAS| default value of the dispersion parameter for the negative binomial DGLM. 

For the DGLM models, it took several attempts to identify an initial value that resulted in convergence for the maximum likelihood estimation.  In general, the likelihoods of state space models often have complex landscapes which can make parameter estimation difficult and sensitive to starting inputs \citep{HelskeKFAS}. For the warpDLM methods, a suitable starting point was found by running a a Gibbs sampler for 2000 iterations (treating the first 500 as burn-in) and then using the posterior means as the input to the optimization algorithm.


In order to evaluate the forecasting performances, we used time series cross-validation for one-step-ahead predictions.  Starting at $t=100$, each of the five methodologies (three warpDLM methods and two DGLM methods) were fitted and one-step-ahead forecasts were drawn. For 50 equally dispersed time points from $t=100$ to $t=200$, the process of training and drawing one-step ahead forecasts was repeated.  Thus, we have 50 observations to compare to 50 one-step-ahead forecasts for each of the 30 time series. 

Since we evaluate a multitude of series, we follow \cite{Kolassa_2016} in testing the uniformity of the rPIT values using the  data-driven smooth test of \cite{Ledwina_1994}. Hence, the output is a p-value, where small values indicate poor  calibration. 
We then consider the sharpness by evaluating the logarithmic scoring rule. Our output for these models is 5000 draws from the predictive distribution, so we approximate the probability mass by calculating the proportion of draws which predicted that value. If a value was not sampled in any posterior draw, but was observed, then the logarithmic scoring rule would be infinite.  To avoid this, we instead set the mass in that case to 0.0001. For each series, we find the mean log score across the 50 time points and compute the percent difference over a baseline method, chosen to be the Poisson DGLM. The logarithmic scoring rule is negatively oriented, so a negative percent difference equates to improved forecasting.

Figure \ref{fig:ZIP} presents the calibration and sharpness across simulations for the zero-inflated and bounded Poisson example. The binomial DGLM exhibited extremely poor forecasting performance and is omitted for clarity of presentation; results including the binomial DGLM are in the supplement.  A clear takeaway is that the warpDLM with nonparametric  transformation is well-calibrated. The identity warpDLM is also reasonably well-calibrated for many simulations, highlighting the importance of the rounding operator. The competing models produce extremely low p-values for almost every simulation, which decisively shows that these models are not well-calibrated;  \cite{czado2009predictive} argue that such poor calibration should automatically disqualify these models. The results for sharpness are consistent: the warpDLM with nonparametric transformation produces the best distributional forecasts (on average 30\% better than the Poisson DGLM), while the warpDLM with identity transformation is the second best. Modifications for the Poisson and negative binomial DGLMs to  enforce the upper bound of 24 on forecasts did not substantively change the results. Clearly, the warpDLM offers significant forecasting gains over competing methods, and advertises both good calibration and maximal sharpness.

\begin{figure}
    \centering
    \includegraphics[width=\linewidth,height=\textheight,keepaspectratio]{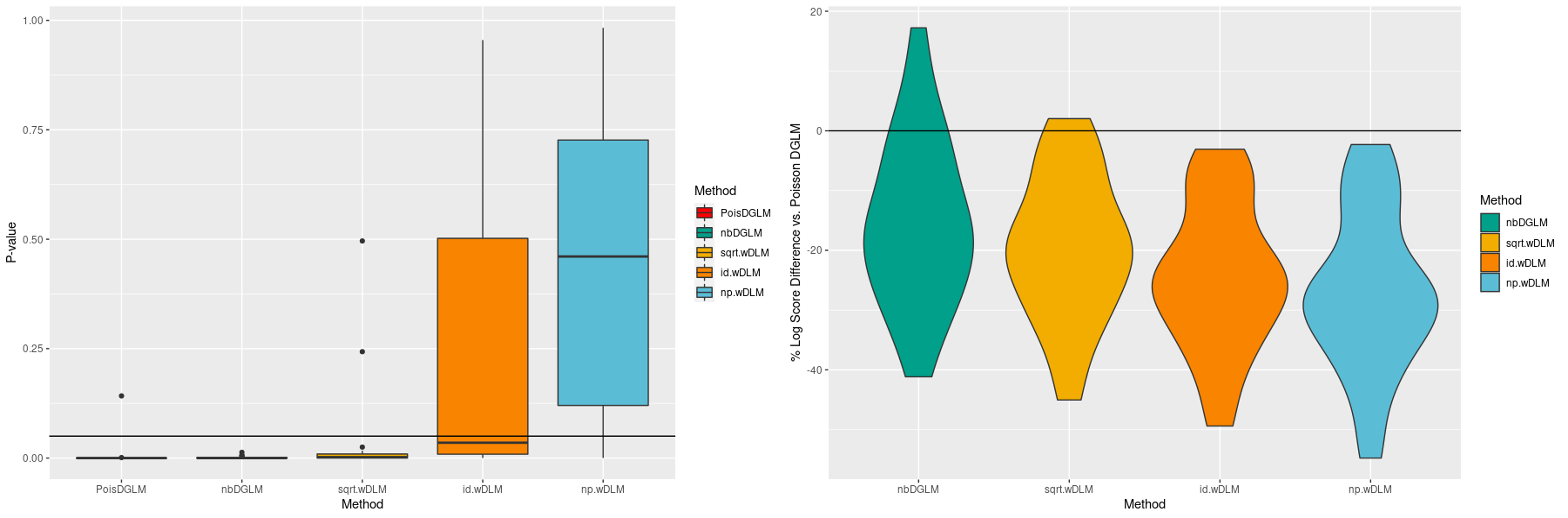}
    \caption{Zero-inflated Poisson simulations; Left: p-values measuring calibration (larger is better) across simulations, with a line at p=0.05; Right: percent difference in log score compared to baseline (negative values imply improved forecasting relative to Poisson DGLM)}
    \label{fig:ZIP}
\end{figure}

The second simulation scenario is presented in Figure~\ref{fig:INGARCH}. This simpler scenario should be more favorable for the Poisson DGLM. Nonetheless, the warpDLM has the flexibility to match or improve upon the DGLM forecasts. All five models show proper calibration for nearly all simulations. Each of warpDLMs exhibits sharpness similar to the Poisson DGLM, and the warpDLM with square-root transformation actually appears to improve on the forecasting for the majority of the series. The negative binomial DGLM performs uniformly worse than all other methods.

\begin{figure}
    \centering
    \includegraphics[width=\linewidth,height=\textheight,keepaspectratio]{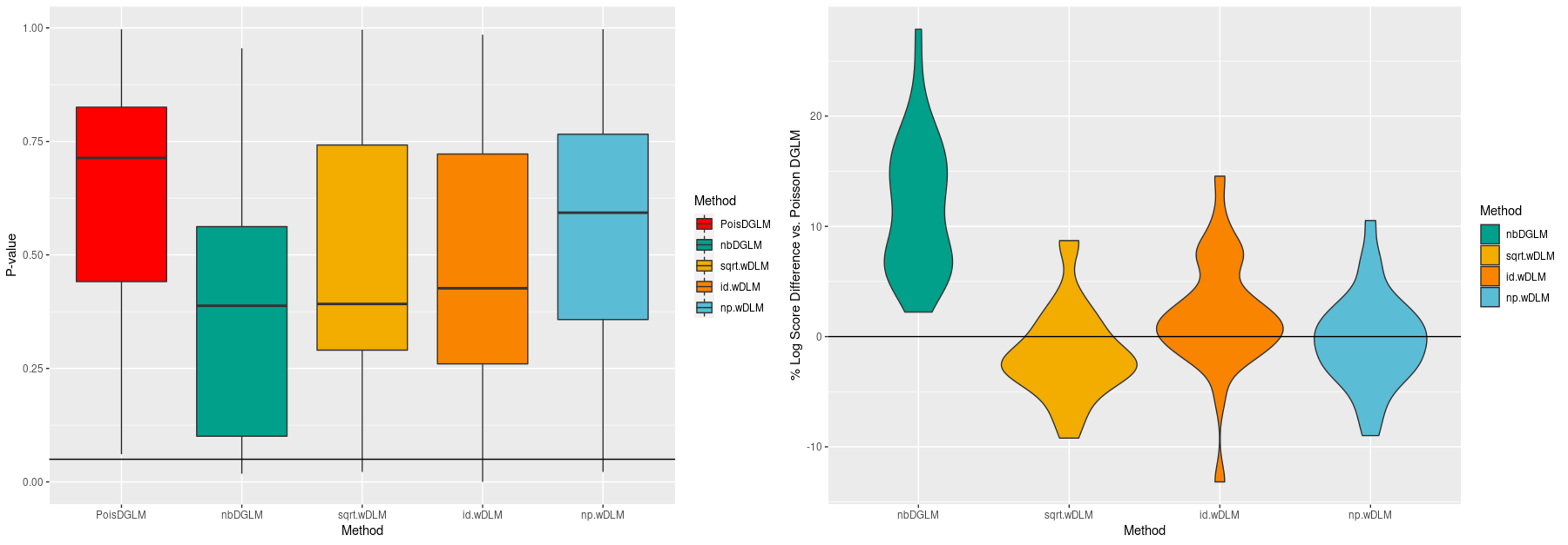}
    \caption{INGARCH simulations; Left: p-values measuring calibration (larger is better) across simulations, with a line at p=0.05; Right: percent difference in log score compared to baseline (negative values imply improved forecasting relative to Poisson DGLM)}
    \label{fig:INGARCH}
\end{figure}

Additional simulation results for higher-dimensional data ($n=5$, $T = 4000$) are included in the supplementary material. 

\subsection{Real Data Analysis}
\label{subsec:realdata}
We apply the warpDLM to a multivariate time series of drug overdose (OD) counts in Cincinnati. The OD reports are contained within a dataset of all fire and EMS incidents in the   \href{https://data.cincinnati-oh.gov/Safety/Cincinnati-Fire-Incidents-CAD-including-EMS-ALS-BL/vnsz-a3wp}{city's open access data repository}, and include daily counts of EMS calls for heroin and non-heroin ODs from January 1st, 2018 to January 31st, 2021; see the supplement for data wrangling and cleaning details. Notably, Cincinnati has struggled with a severe problem of drug ODs in the city, and in particular heroin \citep{li2019suspected}. From a statistical modeling perspective, these data are discrete, time-ordered, multivariate ($n=2$), and moderately lengthy ($T = 1127$); see Figure~\ref{fig:bivariateTS}.

\begin{figure}
    \centering
    \includegraphics[width=\linewidth]{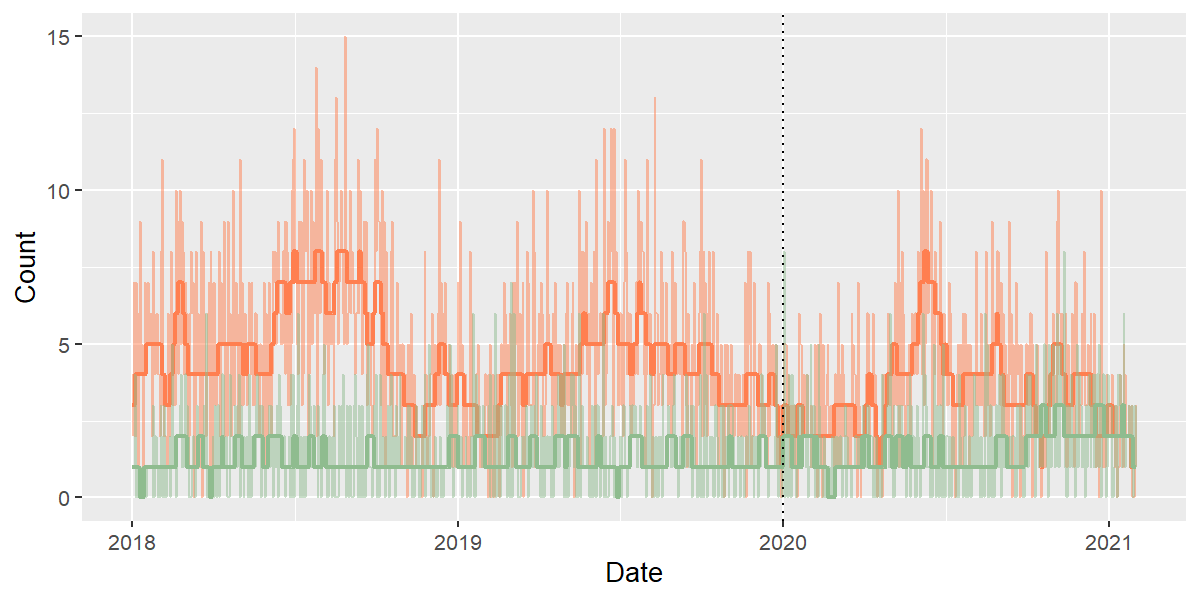}
    \caption{Bivariate time series of daily counts of EMS calls for heroin (larger values) and non-heroin ODs. The dotted vertical line represents the transition between the online and offline datasets. The solid lines are the median smoothed predictions from the warpDLM.}
    \label{fig:bivariateTS}
\end{figure}

We specify a warpDLM with the nonparametric transformation from \eqref{g-hat}, the default rounding operator, and a linear growth model for each series $i\in\{1,2\}$:
\begin{align}
\label{eq:SUTSE_observation}
z_{i,t} &= \mu_{i,t} + v_{i,t} \\
\label{eq:SUTSE_level}
\mu_{i,t} &= \mu_{i,t-1} + \beta_{i,t-1} + w_{i,t-1}^\mu \\
\label{eq:SUTSE_slope}
\beta_{i,t} &= \beta_{i,t-1} + w_{i,t-1}^\beta
\end{align}
where $\mu_{i,t}$ is a local level, $\beta_{i,t}$ is a slope, and \eqref{eq:SUTSE_level}--\eqref{eq:SUTSE_slope} comprise the DLM evolution equation \eqref{dlm-evol}. The linear growth model allows for more flexibility than the local level specification previously introduced, but also can effectively reduce to the simpler local level if the data supports it (i.e. if the slope variance parameters are estimated to be very small).  Dependence between the $n=2$ series is induced by the observation error distribution $\bm v_{t} \sim N_2\left(\bm 0, \bm V\right)$ and the state error distributions  $\bm w_{t}^\mu \sim N_2\left(\bm 0, \bm W_\mu \right)$ and  $\bm w_{t}^\beta \sim N_2\left(\bm 0, \bm W_\beta \right)$ for $\bm v_{t} = (v_{1,t}, v_{2,t})'$, 
$\bm w_{t}^\mu = (w_{1,t}^\mu, w_{2,t}^\mu)'$, and
$\bm w_{t}^\beta = (w_{1,t}^\beta, w_{2,t}^\beta)'$, with $\bm V, \bm W_\mu, \bm W_\beta$ assigned inverse-Wishart priors.  Model \eqref{eq:SUTSE_observation}--\eqref{eq:SUTSE_slope} is expressed in the usual DLM form of \eqref{dlm-obs}--\eqref{dlm-evol} in the supplement. As an alternative approach, we also constructed a seasonal warpDLM, but its performance was indistinguishable from the linear growth model (see the supplementary material).

We highlight the capabilities of the warpDLM for both offline and online analysis of multivariate count time series data. Specifically, we perform an offline analysis for the years of 2018 and 2019 ($T=730$) and switch to an online particle filter for the 2020 and January 2021 data ($T=397$). Given the lengthy time series, offline inference was conducted using the Gibbs sampler (Algorithm \ref{algo:gibbs}), which returned 10,000 draws from the state filtering distribution of $\bm \theta_t$ for $t=730$ and the posterior means of the variance components. These quantities were input into the particle filter of Algorithm \ref{algo:particle} and run sequentially through each of the $T=397$ time points in the online dataset, resulting in 10,000 sequential paths of the smoothed states.  These smoothed states were iterated one step ahead to obtain draws from the one-step forecasting distribution as in Section \ref{subsubsec:forecast}. We also considered longer forecast horizons (see the supplementary material). Unsurprisingly, the forecast distribution accuracy decreased for longer horizons.

To summarize the trends in Figure~\ref{fig:bivariateTS}, we compute the pointwise medians of the smoothing predictive distributions $p(\Tilde{\bm y}_{t} | \bm y_{1:T})$ under the warpDLM. For each sampled path of states $\bm \theta_{1:T}^* \sim p(\bm \theta_{1:T} | \bm y_{1:T})$, we draw a corresponding path $\Tilde{\bm z}_{1:T}$ via the DLM observation equation \eqref{dlm-obs}, and set $ \bm{\tilde y}_{1:T}  = h \circ g^{-1}(\Tilde{\bm z}_{1:T})$, similar to the forecasting procedure in   Section~\ref{subsubsec:forecast}. Notably, this quantity provides a count-valued  point estimate that ``smooths" each time series conditional on the complete data $\bm y_{1:T}$. The heroin ODs exhibit both higher values and greater variability over time, although the  two time series appear to converge to similar levels in early 2021. 

To evaluate the fitness of the warpDLM, we report calibration of the one-step forecasts in Figure~\ref{fig:rPIT}. The sorted rPIT values of both series are plotted against standard uniform quantiles; the 45 degree line indicates calibration.  Since the rPIT values are random quantities, some variation is natural. We illustrate this inherent variability by sampling 100 draws  of size $396$ from a standard uniform distribution, and plot their sorted values against the uniform quantiles.
\begin{figure}
    \centering
    \includegraphics[width=.9\linewidth,height=\textheight,keepaspectratio]{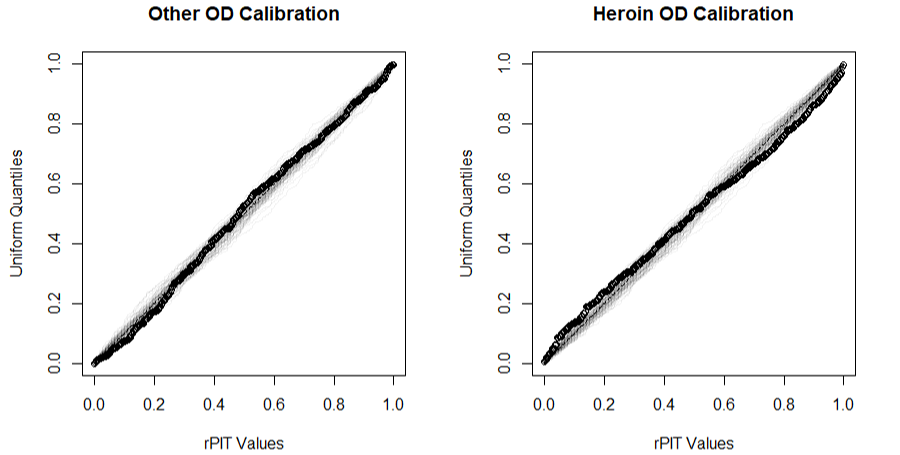}
    \caption{rPIT plots from one-step-ahead particle filter forecasts for both OD count series. The dashed 45 degree line indicates perfect calibration. Gray lines are computed from standard uniform draws, to illustrate inherent variability in the randomized metric.}
    \label{fig:rPIT}
\end{figure}

The warpDLM one-step forecasts are well-calibrated for non-heroin ODs. For the heroin ODs, the calibration shows minor deviations from uniformity. The histogram of the rPIT values (not shown here) shows a slightly inverse-U shape, which implies forecasts are overdispersed relative to the actual values. A closer investigation of the data shows that the online dataset showed considerably less variation than the offline dataset. Since the nonparametric transformation \eqref{g-hat} was inferred using only the \emph{offline} dataset, this result suggests the need to occasionally update the transformation during online inference, for example by updating the estimate of the marginal CDF $F_y$.

Lastly, we evaluate the computational performance of the optimal particle filter, both for Monte Carlo efficiency and raw computing  time. Figure~\ref{fig:ESS} presents the effective sample size, $ESS = 1/ \sum_{s=1}^S (\Tilde{w_t}^{(s)})^2$ using the normalized weights $\Tilde{w_t}$, along with the number of seconds needed to update the model at each time point (on a laptop with an Intel Core i5-6200U CPU with 8 GB RAM). The \emph{optimal} particle filter---uniquely obtained via the warpDLM distributional results from Section~\ref{sec:theory}---produces consistently large ESS values. Crucially, the computation time does \emph{not} increase with the time index and hovers around 13-14 seconds for most updates. Given this computation time, the warpDLM may be applied to more demanding streaming data, such as minute-by-minute data. In all, the optimal particle filter for the warpDLM enabled (i) modeling of multivariate  count time series  data, (ii)  well-calibrated forecasts, and (iii) sufficient scalability for moderate- to high-frequency online analysis.

\begin{figure}
    \centering
    \includegraphics[width=.9\linewidth,height=\textheight,keepaspectratio]{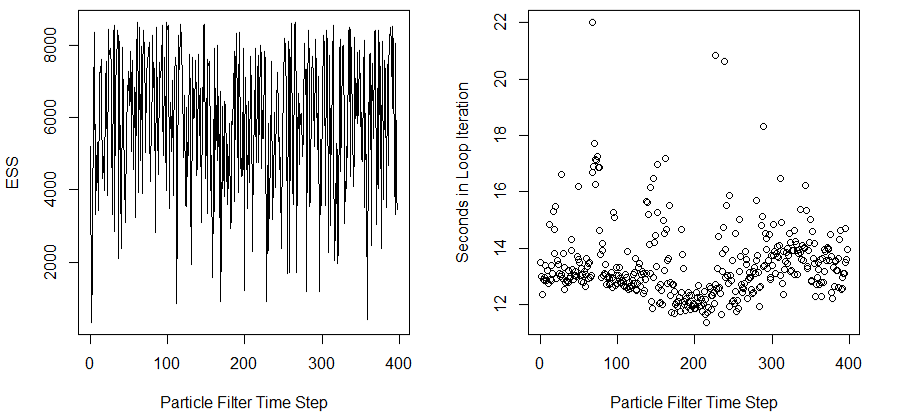}
    \caption{Effective sample sizes (left; maximum is 10,000) and raw computing time in seconds (right) for each time point in the online dataset.}
    \label{fig:ESS}
\end{figure}

\section{Conclusion}
\label{sec:conc}
We introduced a state space modeling framework for a multivariate  time series of counts by \emph{warping} a dynamic  linear model  (warpDLM). The  warpDLM advertises (i) a data-coherent model (and forecasting distribution) that matches the discrete support of the data; (ii) semiparametric modeling capabilities via an inferred transformation; (iii) the familiarity and dynamic flexibility of DLMs, including the ability to handle missing data; (iv) analytic filtering and smoothing recursions; and (v) customized algorithms for forecasting and inference, including direct Monte Carlo sampling and a Gibbs sampler for offline inference and an optimal particle filter for online  inference. These results and methods also apply for integer-valued and rounded data.

Using simulated data, we showed that the warpDLM offers better distributional forecasting than Bayesian competitors, particularly when the count data exhibit multiple complexities such as zero-inflation and boundedness.  Finally, we showcased the online inference and forecasting capabilities of the warpDLM  for a multivariate time series of  drug overdose  counts.  

The generality of the warpDLM enables several useful extensions. Successful tools from (Gaussian) DLMs, such as scale-free variance modeling or discount factors,  may  be incorporated into the warpDLM framework. These specifications  often result in $t$-distributed state updates and predictions, which may be linked to our results via the selection-$t$ distribution. Similarly, other increases in modeling complexity can be accompanied by appropriate algorithmic advancements, such as sequential Monte Carlo methods that sample both the unknown parameters and the states concurrently \citep{SMC2_2012}. Despite the inherent challenges in these more complex modeling and computing environments, the analytic and recursive updates derived for the warpDLM offer a promising pathway for efficient and perhaps optimal implementations.  

The warpDLM can also be thought of as a (nonlinear) hierarchical state-space model (HSSM; \citealp{gamerman1993dynamic}), with the latent data $z_t$ introducing a first-level hierarchy. In particular, connections may be made to the very general HSSM structure presented in \cite{katzfuss2020ensemble}, with the warping operation presented here acting as the transformation layer in their setup. Exploring this relationship further might open up new possibilities for applying the warpDLM in high-dimensional problems.

\bibliographystyle{apalike}
\bibliography{refs.bib}

\clearpage
\pagenumbering{arabic}
\begin{appendices}
\counterwithin{figure}{section}

\noindent

This supplementary document contains the following:
\begin{itemize}
    \item Section~\ref{sec-proofs}: Proposition and proof of SLCT-N closure under linear transformations as well as proofs of main results in the paper
    \item Section \ref{sec:simextra}: Results from simulations that include binomial DGLM
    \item Section \ref{sec:appdeets}: Further details on application dataset and additional model results
    \item Section \ref{sec:particlesim}: Results from particle filter applied to higher-dimensional simulated dataset
\end{itemize}

\section{Further Theory and Proofs}
\label{sec-proofs} 
As discussed in Section~\ref{subsec:selectionbasics}, the selection normal inherits closure under linear transformations, a result which is used to show that the warpDLM state predictive distribution stays within the selection normal family.  We formalize and prove this in the below proposition.
\begin{proposition}
\label{prop:linearity}
    Suppose $\bm \theta \sim \mbox{SLCT-N}_{n, p}(\bm \mu_z, \bm \mu_\theta, \bm \Sigma_z, \bm \Sigma_\theta, \bm \Sigma_{z\theta},  \mathcal{C})$ and let $\bm \theta^* \stackrel{d}{=} \bm A \bm\theta + \bm a$ where $\bm A$ is a fixed $q \times p$ matrix and $\bm a \sim N_q(\bm \mu_a, \bm \Sigma_a)$ is independent of $\bm \theta$. 
    
    Then $\bm \theta^* \sim \mbox{SLCT-N}_{n, q}(\bm \mu_z, \bm \mu_{\theta^*}, \bm \Sigma_z, \bm \Sigma_{\theta^*}, \bm \Sigma_{z\theta^*},  \mathcal{C})$ where $\bm \mu_{\theta^*} =\bm A\bm \mu_\theta + \bm \mu_a$, $\bm \Sigma_{\theta^*}= \bm A \bm \Sigma_\theta \bm A' +\bm \Sigma_a$, and $\bm \Sigma_{z\theta^*} = \bm \Sigma_{z\theta} \bm A'$.
\end{proposition}
\begin{proof}
    First consider the distribution of $\bm A \bm \theta$. Since the moments of the joint distribution $(\bm z, \bm \theta)$ are given by assumption, it follows that $(\bm z, \bm A \bm \theta)$ is jointly Gaussian with moments available by straightforward calculation, so $\bm A \bm \theta \sim \mbox{SLCT-N}_{n, q}(\bm \mu_z, \bm A\bm \mu_\theta, \bm \Sigma_z, \bm A \bm \Sigma_{\theta}\bm A', \bm \Sigma_{z\theta} \bm A',  \mathcal{C})$. 
    
    As noted in \citesupp{ArellanoValle2006}, selection normal distributions have a moment generating function (MGF).  For $[\bm \theta | \bm z \in \mathcal{C}] \sim \mbox{SLCT-N}_{n, p}(\bm \mu_z, \bm \mu_\theta, \bm \Sigma_z, \bm \Sigma_\theta, \bm \Sigma_{z\theta},  \mathcal{C})$, the MGF is:
    \begin{equation*}
    \label{mgf-slct-n}
    M_{[\bm \theta | \bm z \in \mathcal{C}]}(\bm s) = \exp\left(\bm s' \bm \mu_\theta + \frac{1}{2}\bm s' \bm\Sigma_\theta\bm s\right) \frac{\bar\Phi_n(\mathcal{C}; \bm\Sigma_{z\theta}\bm s + \bm \mu_z, \bm \Sigma_z)}{\bar\Phi_n(\mathcal{C}; \bm \mu_z, \bm \Sigma_z)}.
    \end{equation*}
    
    Using the selection normal MGF and noting independence between $\bm A \bm\theta$ and $\bm a$, it follows that $M_{\bm \theta^*}(\bm s) = M_{ \bm A \bm \theta+\bm a}(\bm s)= M_{ \bm A \bm \theta}(\bm s) M_{\bm a}(\bm s)$ where $ M_{ \bm A \bm \theta}$ is given by inserting the appropriate parameters into the equation above and $M_{\bm a}(\bm s) = \exp(\bm s' \bm \mu_a + \frac{1}{2}\bm s' \bm \Sigma_a \bm s)$. The result of this product is the MGF of the stated SLCT-N distribution.
\end{proof}

\noindent
We now state the proofs of all results in the main article, with the exception of Theorem \ref{normal-conjugacy}, whose result follows directly from the definition of a selection normal distribution and the warpDLM model setup. The concurrent work of \citesupp{kowalSTARconjugate} also explores Theorem \ref{normal-conjugacy} and Lemma \ref{lem:conjugacy}, but with focus on non-dynamic linear regression for discrete data. 

\begin{proof}[Proof (Lemma \ref{lem:conjugacy})]
    The SLCT-N prior is equivalently defined by $[\bm \theta | \bm z_0 \in \mathcal{C}_0]$ for $(\bm z_0', \bm \theta')'$ jointly Gaussian with moments given in the prior. The posterior is constructed similarly: $[\bm \theta | \bm y] \stackrel{d}{=} [\bm \theta | \bm z_0 \in \mathcal{C}_0, \bm z \in g(\mathcal{A}_{\bm y})] \stackrel{d}{=} [\bm \theta | \bm z \in \mathcal{C}_0 \times g(\mathcal{A}_{\bm y})]$ where $\bm z_1 = (\bm z_0', \bm z')'$. It remains to identify the moments of $(\bm z_1', \bm \theta')' = (\bm z_0', \bm z', \bm \theta')'$. For each individual block of $\bm z_0$, $\bm z$, and  $\bm \theta$ and the pairs ($\bm z_0, \bm \theta$) and ($\bm z, \bm \theta$), the moments are provided by either the prior or the posterior in Theorem \ref{normal-conjugacy}. Finally, we have cross-covariance $\mbox{Cov}(\bm z_0, \bm z) = \mbox{Cov}(\bm z_0, \bm F \bm \theta + \bm v) =  \bm \Sigma_{z_0\theta}\bm F'$. 
\end{proof}

\begin{proof}[Proof (Theorem \ref{theorem:filtering})]
From the model setup, we have $\bm \theta_t =\bm  G_t \bm \theta_{t-1} + \bm w_t$ and $\bm \theta_{t-1} | \bm y_{1:t-1}$ follows a SLCT-N distribution as given.  Thus by Proposition \ref{prop:linearity}, we have our result that $\bm \theta_{t} | \bm y_{1:t-1}$ follows a SLCT-N with the given parameters.  

Then, with the state predictive distribution $\bm \theta_{t} | \bm y_{1:t-1}$ as our prior combined with the warpDLM likelihood \eqref{likelihood-time}, we can directly apply Lemma \ref{lem:conjugacy} to get the second result. 
\end{proof}

\begin{proof}[Proof (Theorem \ref{theorem:smooth})]
We have that $p(\bm \theta_{1:T}| \bm y_{1:T}) \propto p(\bm \theta_{1:T})p(\bm y_{1:T} | \bm \theta_{1:T})$.  By the model specification, we know $\bm \theta_{1:T} \sim N_{pT}(\bm \mu_\theta, \bm \Sigma_\theta)$ as defined above.  Furthermore, recall that given the latent states, the observations are conditionally independent.  Hence we can represent the likelihood as $p(\bm y_{1:T} | \bm \theta_{1:T}) = \prod_{t=1}^T \Bar{\Phi}_n(g(\mathcal{A}_{y_t}); \bm F_t\bm \theta_t, \bm V_t)$ which can be rewritten as $\Bar{\Phi}_{nT}(C; \boldsymbol{\mathfrak{F}}\bm \theta_{1:T}, \boldsymbol{\mathfrak{V}})$.  Therefore, we have
$$
p(\bm \theta_{1:T}| \bm y_{1:T}) \propto \phi_{pT}(\bm \theta_{1:T} ; \bm \mu_\theta, \bm \Sigma_\theta) \Bar{\Phi}_{nT}(C; \boldsymbol{\mathfrak{F}}\bm \theta_{1:T}, \boldsymbol{\mathfrak{V}})
$$
which is the kernel of a SLCT-N distribution with parameters given in \eqref{eq:smoothing}.
\end{proof}

\begin{proof}[Proof (Corollary \ref{corollary:marginallike})]
The marginal likelihood is simply the normalizing constant of the joint smoothing distribution, and its form is given by the denominator in \eqref{density-slct-n}. 
\end{proof}

\begin{proof}[Proof (Corollary \ref{corollary:pfresults})]
By Bayes' Theorem, we have
\begin{equation}
\label{pf-bayes}
p(\bm \theta_{t}| \bm \theta_{t-1}, \bm y_{t}) = \frac{p(\bm y_{t}| \bm \theta_{t})p(\bm \theta_{t}| \bm \theta_{t-1})}{p(\bm y_{t}| \bm \theta_{t-1})} \;.    
\end{equation}
Using the DLM and warpDLM properties, we know that
$$
p(\bm y_{t}| \bm \theta_{t}) = \Bar{\Phi}_n(g(\mathcal{A}_{y_t}); \bm F_t \bm \theta_t, \bm V_t) \quad \mbox{and} \quad p(\bm \theta_{t}| \bm \theta_{t-1}) = \phi_p(\bm G_t \bm \theta_{t-1}, \bm W_t) \;.
$$

Putting this together, recognize that the numerator of \eqref{pf-bayes} forms the kernel of a selection normal distribution \eqref{density-slct-n} with parameters given in \eqref{eq:importance}.  Furthermore, the marginal distribution $p(\bm y_{t}| \bm \theta_{t-1})$ is that kernel's normalizing constant, leading us to the result of \eqref{eq:weight}.
\end{proof}

\section{Binomial DGLM Simulation Results}
\label{sec:simextra}
In the main article, we did not show the binomial DGLM forecasting results for the bounded, zero-inflated Poisson simulation because it performed much worse than all other methods and therefore skewed the graphics.  For completeness we display here in Figures \ref{fig:ZIPcal} and \ref{fig:ZIPsharp} the corresponding calibration and sharpness plots with binomial DGLM included.

\begin{figure}
    \centering
    \includegraphics[width=.6\linewidth,height=\textheight,keepaspectratio]{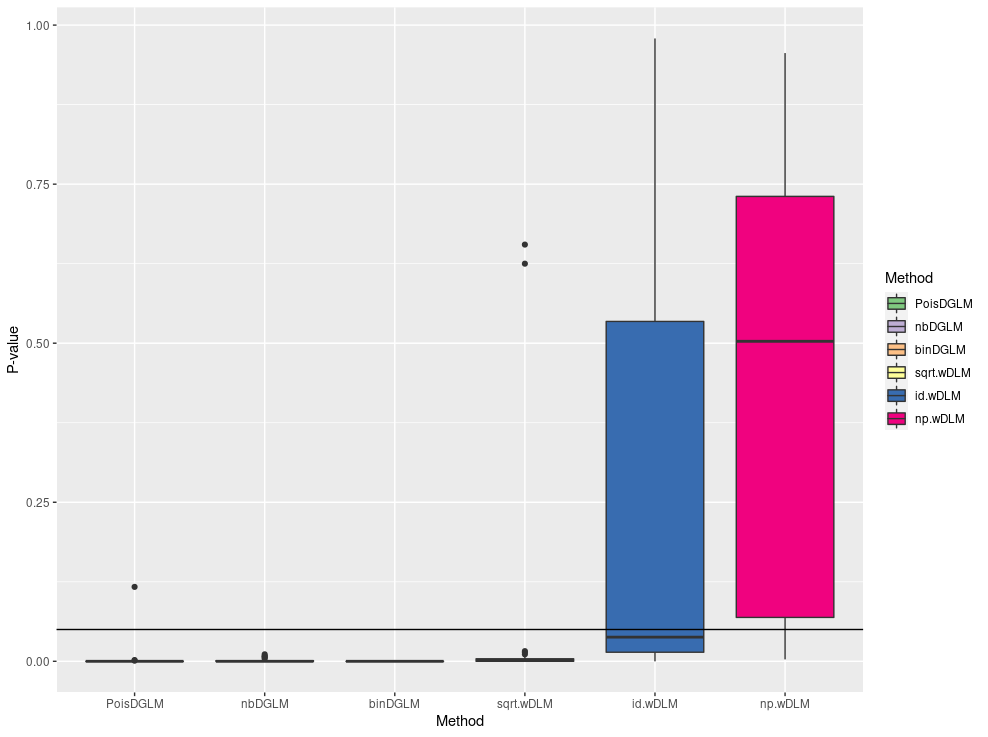}
    \caption{Box plot of p-values measuring calibration (larger is better) across simulations, with a line at p=0.05}
    \label{fig:ZIPcal}
\end{figure}

\begin{figure}
    \centering
    \includegraphics[width=.6\linewidth,height=\textheight,keepaspectratio]{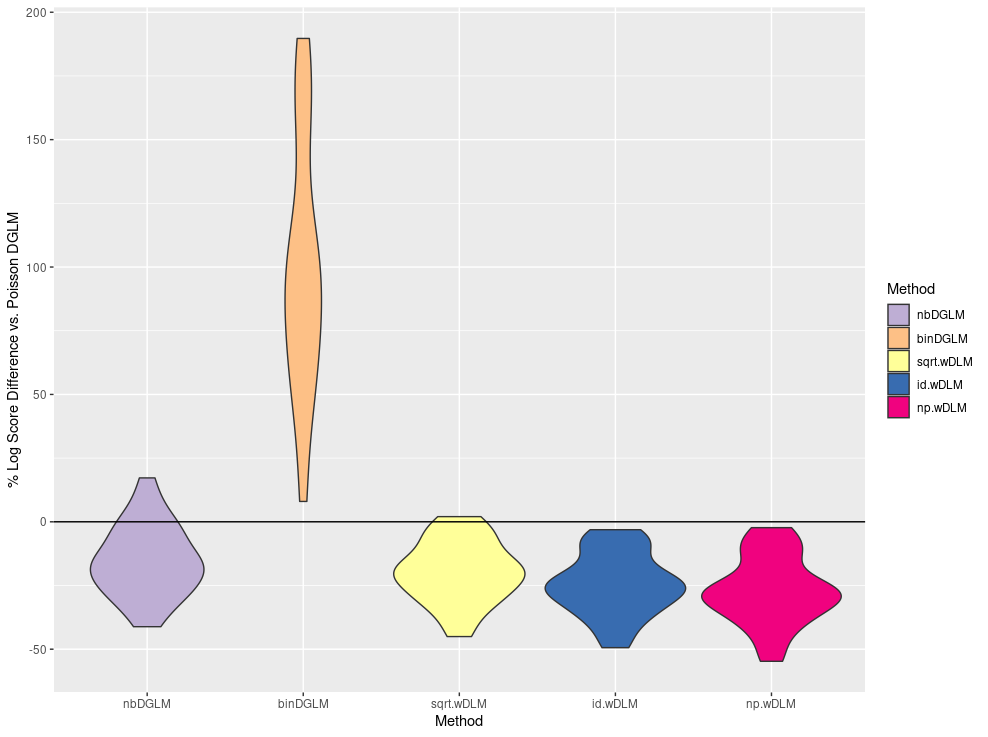}
    \caption{Violin plot of percent difference in log score compared to baseline across simulations (negative difference implies improved forecasting relative to Poisson DGLM)}
    \label{fig:ZIPsharp}
\end{figure}

\section{Application Details}
\label{sec:appdeets}

\subsection{Dataset Cleaning}
The Cincinnati dataset contains all fire department data, but our focus in this application is only on the overdose responses. To extract the appropriate observations, we followed the data dictionary in selecting heroin overdose responses as those with rows whose \verb|INCIDENT_TYPE_ID| was one of "HEROIF", "HEROIN-COMBINED", "HEROINF - FIRE ONLY", "HEROINF-FIRE ONLY", "HERON F", or "HERONF".  For the ``other'' overdose responses, we took all events which correspond to EMS protocol 23 (Overdose/Poisoning), i.e. those rows whose \verb|INCIDENT_TYPE_ID| started with the number 23.

Once the appropriate responses were selected, the only cleaning necessary was to count the number of occurrences per day and properly format dates. We did so by binning the observations into days based on the \verb|CREATE_TIME_INCIDENT| and assigning zeroes on any days where no incidents were reported.

\subsection{Linear Growth Model Details}
The warpDLM latent specification for the application is an example of Seemingly Unrelated Time Series Equations (SUTSE), where the inter-series dependence is defined through the covariance matrix of the model errors. Given the linear growth model for each series as defined in equations \eqref{eq:SUTSE_observation}--\eqref{eq:SUTSE_slope}, the full bivariate model can be rewritten in the general DLM formulation. Specifically, with $\bm \theta_t = (\mu_{1,t}, \mu_{2,t}, \beta_{1,t}, \beta_{2,t})'$, the state evolution equation  is
\begin{equation}
\label{eq:SUTSE_state}
\bm \theta_t
=
\begin{bmatrix}
1 & 0 & 1 & 0\\
0 & 1 & 0 & 1\\
0 & 0 & 1& 0\\
0 & 0 & 0 & 1
\end{bmatrix}
\bm \theta_{t-1}
+ \bm w_t
\quad 
\bm w_t \sim N_4\left(\bm 0, \bm W = 
\left[\begin{array}{@{}c|c@{}}
\normalfont\Large \bm W_\mu
  & \bigzero \\
\hline
  \bigzero &
\normalfont\Large \bm W_\beta
\end{array}\right]\right)
\end{equation}
and the observation equation is
\begin{equation}
\label{eq:SUTSE_obs}
\bm z_t
=
\begin{bmatrix}
1 & 0 & 0 & 0\\
0 & 1 & 0 & 0
\end{bmatrix}
\bm \theta_t
+ \bm v_t
\quad 
\bm v_t \sim N_2\left(\bm 0, \bm V\right)
\end{equation}
for the latent $\bm z_t = (z_{1,t}, z_{2,t})$.

\subsection{Seasonal Model Specification}
In addition to the linear growth model presented above, we also implemented a SUTSE model with a seasonal term. There are two main ways to include seasonality within DLMs: dummy variables or Fourier-form/trigonometric seasonality. 
 For a time series with period $s$, the dummy approach requires $s-1$ state variables, whereas the trigonometric approach only requires $\lfloor(s/2)\rfloor$, or even less, since the number of trigonometric terms can be truncated to achieve a smoother and more parsimonious seasonal trend. The overdose data appears to have a yearly seasonality and we have daily data, so our period is $s=365$. For such a long period, Fourier-form seasonality is most reasonable.
 
 We used the same linear growth specification as described in Section \ref{subsec:realdata}, but added the first 5 harmonics from a trigonometric seasonal component of period 365.
 A linear growth DLM with Fourier-form seasonality inherits the same state evolution dynamics as \eqref{eq:SUTSE_level}-\eqref{eq:SUTSE_slope} for the local level and slope, but the observation equation adds a seasonal term $\gamma_{i,t}$ for each series $i \in {1,2}$. Thus, equation \eqref{eq:SUTSE_observation} becomes
 \begin{equation}
 z_{i,t} = \mu_{i,t} + \gamma_{i,t} + v_{i,t} \;.
 \end{equation}
We write the seasonal term $\gamma_{i,t}$ dynamics as follows:
\begin{align}
 \gamma_{i,t} &= \sum_{j=1}^{5} \gamma_{i, j,t} \\
\gamma_{i,j,t} &= \gamma_{i,j,t-1} \cos{\lambda_j} + \gamma^*_{i,j,t-1} \sin{\lambda_j} + w_{i,j,t-1}^\gamma \\
\gamma^*_{i,j,t} &= -\gamma_{i,j,t-1} \sin{\lambda_j} + \gamma^*_{i,j,t-1} \cos{\lambda_j} + w_{i,j,t-1}^{\gamma^*} 
\end{align}
where $j=1,\ldots,5$ and $\lambda_j=2\pi j / 365$. Inter-series dependence between the seasonal components can come from the error terms:  $\bm w_{j,t-1}^\gamma, \bm w_{j,t-1}^{\gamma^*} \overset{iid}{\sim}  N_2\left(\bm 0, \bm W_\gamma \right)$.
 For more details on Fourier-form seasonality and how it can be represented in traditional DLM form, consult Section 3.2.3 of \citesupp{petrisDLM} or Section 6.1 of \citesupp{HelskeKFAS}. 
 
The above model results in a state space of dimension 24: $\bm \theta_t = (\mu_{i,t}, \beta_{i,t}, \beta_{i,t}, \gamma_{i,j,t}, \gamma^*_{i,j,t})'$ for $i \in \{1,2\}$ and $j=1,\ldots,5$. We set the variance of the seasonal component $\bm W_\gamma$ to be 0, so that the periodic piece of the model was static. Using this model specification, we performed the same analysis, running first the offline Gibbs sampler followed by the particle filter. The seasonal component does not appear to have much effect, as the forecasting performance was much the same as the linear growth model, even at longer horizons. This can be seen in Figure \ref{fig:seasonal_fc}, where we compare the performance of the two model specifications on the heroin time series.  To do so, we compute the percent difference in log score between the seasonal model and the linear growth model across different forecast horizons. Interestingly, the linear growth model actually seems to produce slightly better forecasts on average at longer horizons, although the difference is very minimal.
\begin{figure}
    \centering
    \includegraphics[width=.8\linewidth, keepaspectratio]{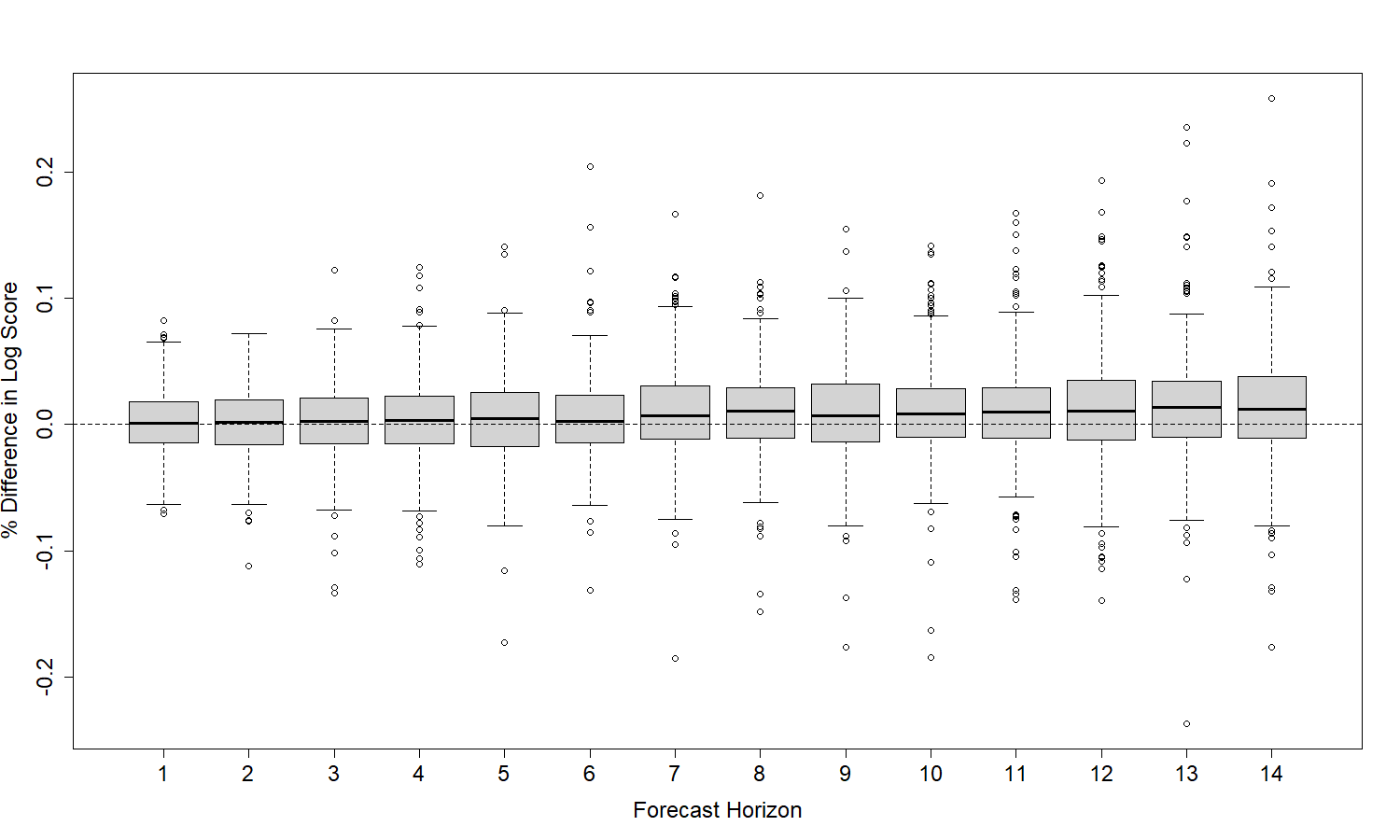}
    \caption{Percent difference in log score of seasonal model forecasts for heroin OD time series compared to linear growth model predictions, plotted across forecast horizons. The positive values for longer horizons suggest that the linear growth model outperforms the seasonal model for these longer-term forecasting distributions.}
    \label{fig:seasonal_fc}
\end{figure}

\subsection{Multi-Step Forecasting Results}
In the main text, we primarily focus on the 1-step-ahead forecasts.  Here, we investigate longer forecast horizons, which may be more relevant for planners and decision-makers in certain cases. We computed forecasts ranging from one day ahead to two weeks ahead (14 steps ahead with our daily data). Calibration for multi-step-ahead forecasts was usually less satisfactory, with the multi-step-ahead forecast distributions often overdispersed relative to the observed data. This is not surprising, given that the uncertainty of the forecast distribution accumulates with each additional forecast step. Figure \ref{fig:multistep} shows the boxplot of log scores across forecast horizons from the linear growth warpDLM model for both time series. It can be seen that the log scores steadily increase with the forecast horizon.
\begin{figure}
    \centering
    \includegraphics[width=\linewidth, keepaspectratio]{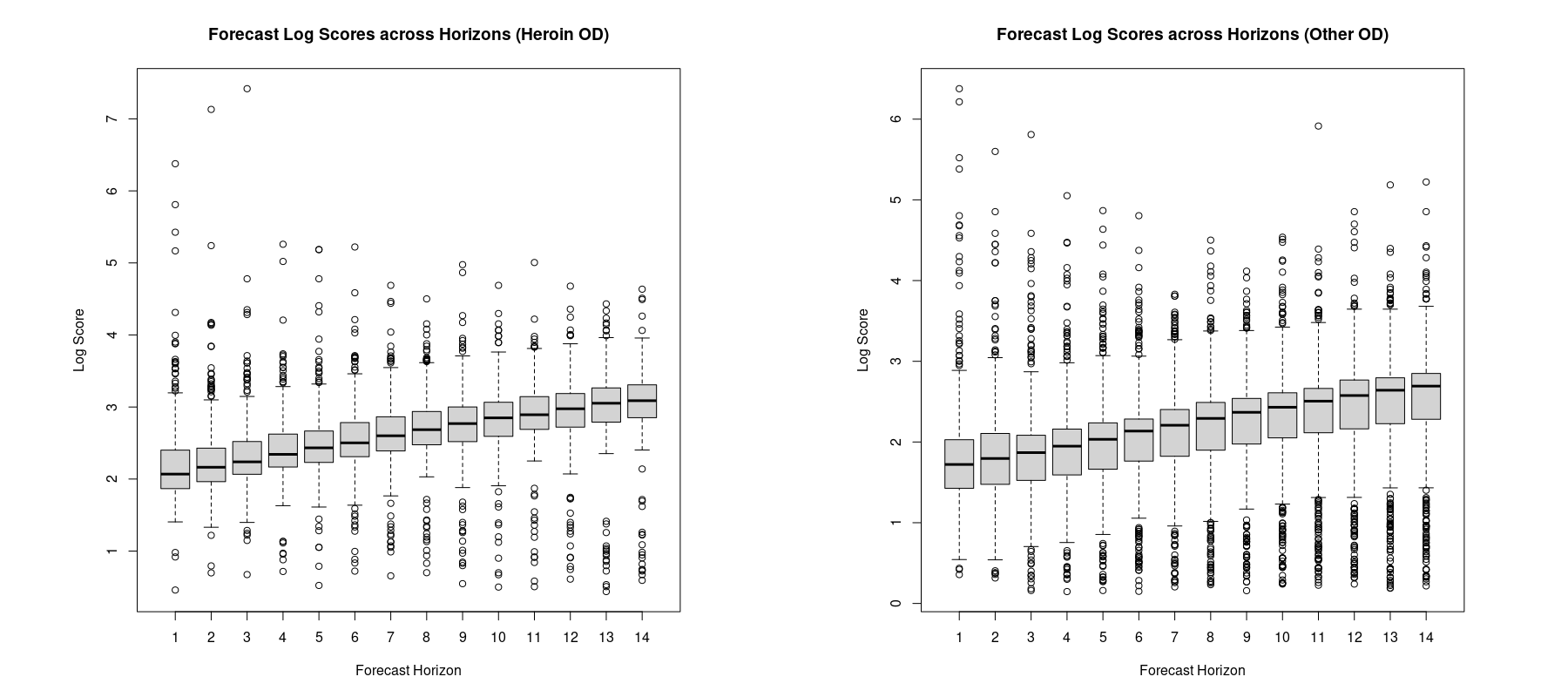}
    \caption{Boxplot of log scores from linear growth warpDLM model across different forecast horizons}
    \label{fig:multistep}
\end{figure}
\section{Particle Filtering in Higher Dimensions}
\label{sec:particlesim}
In Section \ref{subsec:realdata}, we combined offline inference using the Gibbs sampler with online inference via the particle filter. We demonstrated that the particle filter results in  sensible smoothing and calibrated forecasts for a real-data application with $n=2$ and $T=1127$. However, we also wanted to explore how the particle filter performed in higher-dimensional settings, with larger $n$ and longer $T$.
To do so, we simulated a dataset with $n=5$ and $T=4000$. For the data-generating process, we used the multivariate Poisson-scaled Beta (MPSB) class developed in \citesupp{Aktekin2018}, as this provides a relatively simple method of simulating multivariate count time series data. Briefly, this class is described by the following hierarchical model:
\begin{align}
(Y_{jt} | \lambda_j, \theta_t) &\sim \mbox{Pois}(\lambda_j\theta_t) \quad j = 1,\ldots, n \; ; \; t=1,\ldots,T \\
\theta_t &= \frac{\theta_{t-1}}{\gamma} \epsilon_t \\
(\epsilon_t | \bm Y_{1:t-1}, \lambda_1, \ldots, \lambda_n) &\sim \mbox{Beta}(\gamma \alpha_{t-1}, (1-\gamma) \alpha_{t-1}) \\
\alpha_t &= \gamma\alpha_{t-1} + \sum_{j=1}^n Y_{jt} 
\end{align}
where $\alpha_{t-1} > 0$ , $0 < \gamma < 1$ and $D_{t-1} = \{D_{t-2}, Y_{1,t-1},\ldots,Y_{n,t-1}\}$.
The model is finalized by setting the prior for $\theta_0$, which is of the form $\theta_0 \sim \mbox{Gamma}(\alpha_0, \beta_0)$. To simulate from this model, it is also necessary to fix $\gamma$ and $\lambda_j$ for $j=1,\ldots,n$.
For our simulation, we generally followed the setup from Section 4.1 of \citesupp{Aktekin2018}, setting the series-specific factors $\lambda_j$ to $(2, 2.5, 3, 3.5, 4)$ and drawing the initial state from $\mbox{Gamma}(\alpha_0=100, \beta_0=100)$. However, we modified the common discount parameter $\gamma$, as the value of $0.3$ used in the \citesupp{Aktekin2018} simulation resulted in time series collapsing to zero for longer time scales. To prevent this collapse, we increased its value considerably, in the end fixing $\gamma=.93$. The resulting time series are plotted in Figure \ref{fig:multivariate_sim}.
\begin{figure}
    \centering
    \includegraphics[width=\linewidth, keepaspectratio]{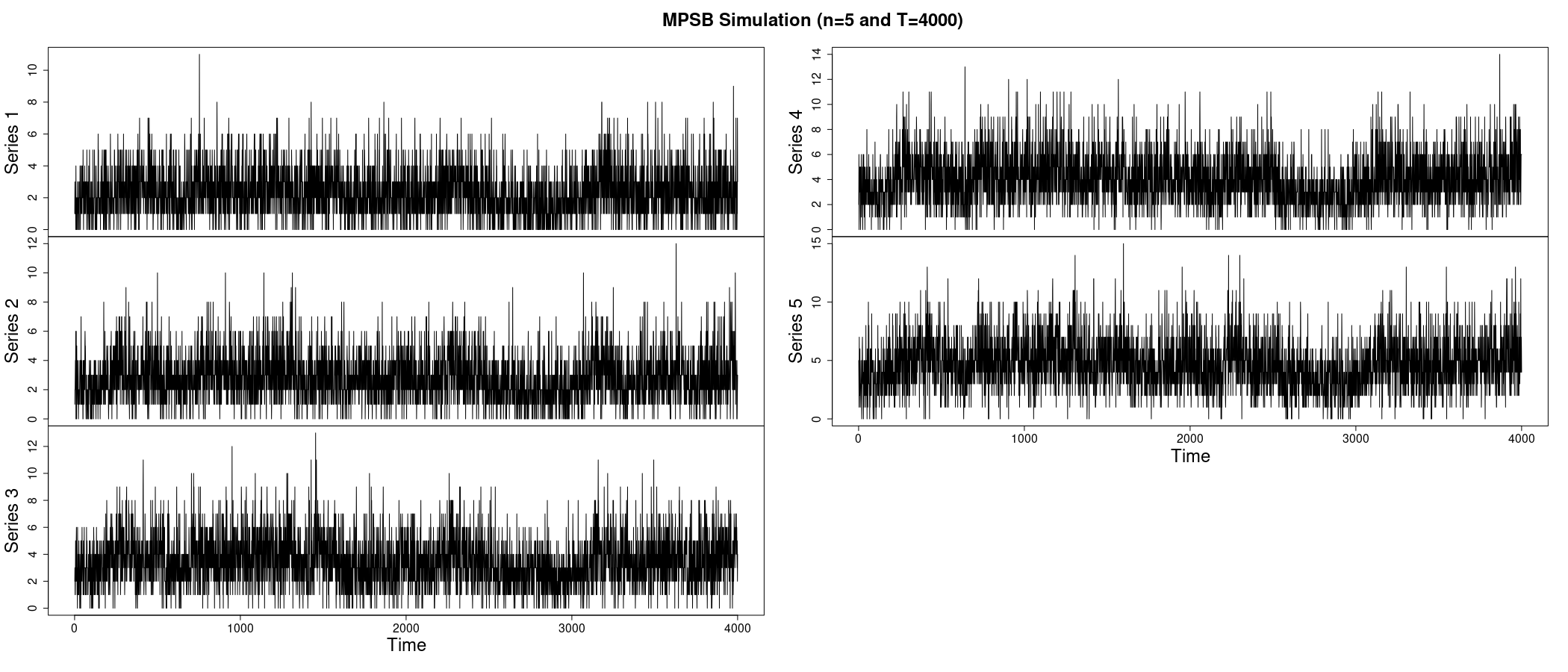}
    \caption{Time series plot of multivariate data simulated from the MPSB class}
    \label{fig:multivariate_sim}
\end{figure}
We modeled this multivariate time series with the same linear growth model as in the real-data application. Similarly, we split the dataset into offline and online portions, running the Gibbs sampler for the first 1000 data points, followed by the particle filter for the remaining 3000 time points. Our primary objective was to ensure the particle filter is able to accurately track the series in higher dimensions. We see that this goal is achieved by visualizing the ``push-forward" of our particles: for each particle $\theta_t^{(s)}$ from our filter (not to be confused with the $\theta_t$ of the MPSB model), we draw from $N(F_t\theta_t^{(s)}, V)$ and apply the inverse transformation and rounding to arrive at a value on the scale of our data. This is similar to the ``smoothing predictive distribution" procedure discussed in Section \ref{subsec:realdata}, only now we are looking at the ``filtering predictive distribution". As with the smoothing, we can take pointwise medians of our draws from this predictive filtering distribution; these are depicted in Figure \ref{fig:multisim_filtered}.
\begin{figure}
    \centering
    \includegraphics[width=\linewidth, keepaspectratio]{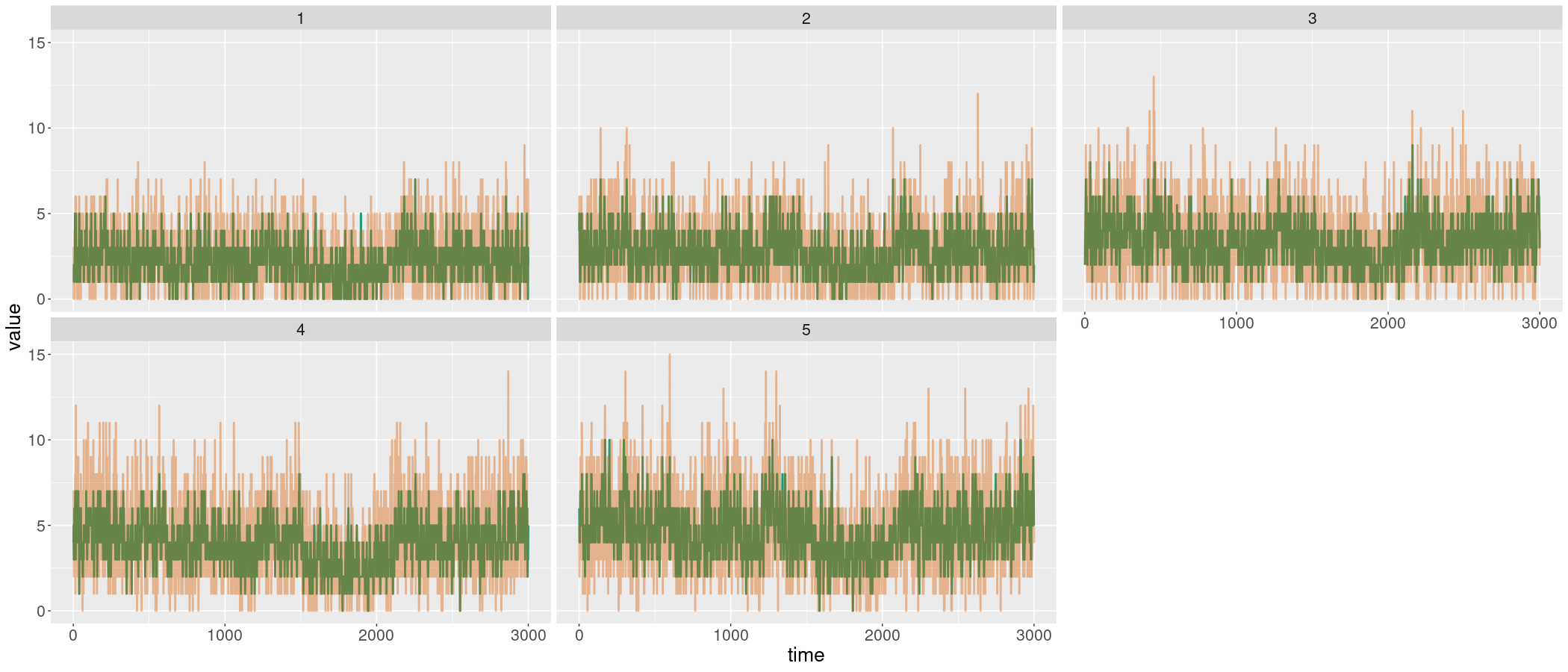}
    \caption{Multivariate online simulated dataset (light orange color) overlaid with the median filtered predictive distribution (dark green color)}
    \label{fig:multisim_filtered}
\end{figure}
The particle filter is tracking each series well. However, it is worth noting that the ESS in this setting is on average significantly lower than that of our real-data application: the ratio of average ESS to number of particles decreased from 58\% to around 24\%. This suggests that larger dimensions may require adoption of further SMC techniques like mutation, as was discussed in Section \ref{subsec:particlefilter}.
As a final note, the computation time for each time point increases with the number of series. This is due both to sampling from the transition density (involving repeated simulation from an $n$-variate multivariate truncated normal) and computation of the weights (evaluating an $n$-variate multivariate normal CDF). These steps must be done for each particle, although such computations could be parallelized for speed improvements. For the $n=5$ example with $S=5000$ particles, each time point is still executed in under 30 seconds, so our particle filter would remain applicable for minute-by-minute streaming data of this dimension. 

\bibliographystylesupp{apalike}
\bibliographysupp{refs.bib}

\end{appendices}

\end{document}


%
\newif\ifblinded
\blindedtrue

\title{\Large Supplement to \\
``Warped Dynamic Linear Models for Time Series of Counts"}

\ifblinded
\author{\vspace{-20mm}}
\date{}
\else

 \author{Brian King\thanks{
    The authors gratefully acknowledge \textit{please remember to list all relevant funding sources in the unblinded version}}\hspace{.2cm}
    and 
    Daniel R. Kowal \thanks{Dobelman Family Assistant Professor, Department of Statistics, Rice University, Houston, TX 77251-1892 (\href{mailto:Daniel.Kowal@rice.edu}{Daniel.Kowal@rice.edu}).}}

\fi

\maketitle 
\appendix
\noindent
This supplementary document contains the following:
\begin{itemize}
    \item Section~\ref{sec-proofs}: Proposition and proof of SLCT-N closure under linear transformations as well as proofs of main results in the paper
    \item Section \ref{sec:gibbs}: Additional discussion of Gibbs sampling for the warpDLM
    \item Section \ref{sec:simextra}: Results from simulations that include binomial DGLM
    \item Section \ref{sec:appdeets}: Further details on application dataset and model
\end{itemize}

\section{Further Theory and Proofs}
\label{sec-proofs} 
As discussed in Section~\ref{subsec:selectionbasics}, the selection normal inherits closure under linear transformations, a result which is used to show that the warpDLM state predictive distribution stays within the selection normal family.  We formalize and prove this in the below proposition.
\begin{proposition}
\label{prop:linearity}
    Suppose $\bm \theta \sim \mbox{SLCT-N}_{n, p}(\bm \mu_z, \bm \mu_\theta, \bm \Sigma_z, \bm \Sigma_\theta, \bm \Sigma_{z\theta},  \mathcal{C})$ and let $\bm \theta^* \stackrel{d}{=} \bm A \bm\theta + \bm a$ where $\bm A$ is a fixed $q \times p$ matrix and $\bm a \sim N_q(\bm \mu_a, \bm \Sigma_a)$ is independent of $\bm \theta$. 
    
    Then $\bm \theta^* \sim \mbox{SLCT-N}_{n, q}(\bm \mu_z, \bm \mu_{\theta^*}, \bm \Sigma_z, \bm \Sigma_{\theta^*}, \bm \Sigma_{z\theta^*},  \mathcal{C})$ where $\bm \mu_{\theta^*} =\bm A\bm \mu_\theta + \bm \mu_a$, $\bm \Sigma_{\theta^*}= \bm A \bm \Sigma_\theta \bm A' +\bm \Sigma_a$, and $\bm \Sigma_{z\theta^*} = \bm \Sigma_{z\theta} \bm A'$.
\end{proposition}
\begin{proof}
    First consider the distribution of $\bm A \bm \theta$. Since the moments of the joint distribution $(\bm z, \bm \theta)$ are given by assumption, it follows that $(\bm z, \bm A \bm \theta)$ is jointly Gaussian with moments available by straightforward calculation, so $\bm A \bm \theta \sim \mbox{SLCT-N}_{n, q}(\bm \mu_z, \bm A\bm \mu_\theta, \bm \Sigma_z, \bm A \bm \Sigma_{\theta}\bm A', \bm \Sigma_{z\theta} \bm A',  \mathcal{C})$. 
    
    As noted in \cite{ArellanoValle2006}, selection normal distributions have an moment generating function (MGF).  For $[\bm \theta | \bm z \in \mathcal{C}] \sim \mbox{SLCT-N}_{n, p}(\bm \mu_z, \bm \mu_\theta, \bm \Sigma_z, \bm \Sigma_\theta, \bm \Sigma_{z\theta},  \mathcal{C})$, the MGF is:
    \begin{equation*}
    \label{mgf-slct-n}
    M_{[\bm \theta | \bm z \in \mathcal{C}]}(\bm s) = \exp\left(\bm s' \bm \mu_\theta + \frac{1}{2}\bm s' \bm\Sigma_\theta\bm s\right) \frac{\bar\Phi_n(\mathcal{C}; \bm\Sigma_{z\theta}\bm s + \bm \mu_z, \bm \Sigma_z)}{\bar\Phi_n(\mathcal{C}; \bm \mu_z, \bm \Sigma_z)}.
    \end{equation*}
    
    Using the selection normal MGF and noting independence between $\bm A \bm\theta$ and $\bm a$, it follows that $M_{\bm \theta^*}(\bm s) = M_{ \bm A \bm \theta+\bm a}(\bm s)= M_{ \bm A \bm \theta}(\bm s) M_{\bm a}(\bm s)$ where $ M_{ \bm A \bm \theta}$ is given by inserting the appropriate parameters into the equation above and $M_{\bm a}(\bm s) = \exp(\bm s' \bm \mu_a + \frac{1}{2}\bm s' \bm \Sigma_a \bm s)$. The result of this product is the MGF of the stated SLCT-N distribution.
\end{proof}

\noindent
We now state the proofs of all results in the main article, with the exception of Theorem \ref{normal-conjugacy}, whose result follows directly from the definition of a selection normal distribution and the warpDLM model setup. The concurrent work of \cite{kowal2021conjugate} also explores Theorem \ref{normal-conjugacy} and Lemma \ref{lem:conjugacy}, but with focus on non-dynamic linear regression for discrete data. 

\begin{proof}[Proof (Lemma \ref{lem:conjugacy})]
    The SLCT-N prior is equivalently defined by $[\bm \theta | \bm z_0 \in \mathcal{C}_0]$ for $(\bm z_0', \bm \theta')'$ jointly Gaussian with moments given in the prior. The posterior is constructed similarly: $[\bm \theta | \bm y] \stackrel{d}{=} [\bm \theta | \bm z_0 \in \mathcal{C}_0, \bm z \in g(\mathcal{A}_{\bm y})] \stackrel{d}{=} [\bm \theta | \bm z \in \mathcal{C}_0 \times g(\mathcal{A}_{\bm y})]$ where $\bm z_1 = (\bm z_0', \bm z')'$. It remains to identify the moments of $(\bm z_1', \bm \theta')' = (\bm z_0', \bm z', \bm \theta')'$. For each individual block of $\bm z_0$, $\bm z$, and  $\bm \theta$ and the pairs ($\bm z_0, \bm \theta$) and ($\bm z, \bm \theta$), the moments are provided by either the prior or the posterior in Theorem \ref{normal-conjugacy}. Finally, we have cross-covariance $\mbox{Cov}(\bm z_0, \bm z) = \mbox{Cov}(\bm z_0, \bm F \bm \theta + \bm v) =  \bm \Sigma_{z_0\theta}\bm F'$. 
\end{proof}

\begin{proof}[Proof (Theorem \ref{theorem:filtering})]
From the model setup, we have $\bm \theta_t =\bm  G_t \bm \theta_{t-1} + \bm w_t$ and $\bm \theta_{t-1} | \bm y_{1:t-1}$ follows a SLCT-N distribution as given.  Thus by Proposition \ref{prop:linearity}, we have our result that $\bm \theta_{t} | \bm y_{1:t-1}$ follows a SLCT-N with the given parameters.  

Then, with the state predictive distribution $\bm \theta_{t} | \bm y_{1:t-1}$ as our prior combined with the warpDLM likelihood \eqref{likelihood-time}, we can directly apply Lemma \ref{lem:conjugacy} to get the second result. 
\end{proof}

\begin{proof}[Proof (Theorem \ref{theorem:smooth})]
We have that $p(\bm \theta_{1:T}| \bm y_{1:T}) \propto p(\bm \theta_{1:T})p(\bm y_{1:T} | \bm \theta_{1:T})$.  By the model specification, we know $\bm \theta_{1:T} \sim N_{pT}(\bm \mu_\theta, \bm \Sigma_\theta)$ as defined above.  Furthermore, recall that given the latent states, the observations are conditionally independent.  Hence we can represent the likelihood as $p(\bm y_{1:T} | \bm \theta_{1:T}) = \prod_{t=1}^T \Bar{\Phi}_n(g(\mathcal{A}_{y_t}); \bm F_t\bm \theta_t, \bm V_t)$ which can be rewritten as $\Bar{\Phi}_{nT}(C; \boldsymbol{\mathfrak{F}}\bm \theta_{1:T}, \boldsymbol{\mathfrak{V}})$.  Therefore, we have
$$
p(\bm \theta_{1:T}| \bm y_{1:T}) \propto \phi_{pT}(\bm \theta_{1:T} ; \bm \mu_\theta, \bm \Sigma_\theta) \Bar{\Phi}_{nT}(C; \boldsymbol{\mathfrak{F}}\bm \theta_{1:T}, \boldsymbol{\mathfrak{V}})
$$
which is the kernel of a SLCT-N distribution with parameters given in \eqref{eq:smoothing}.
\end{proof}

\begin{proof}[Proof (Corollary \ref{corollary:marginallike})]
The marginal likelihood is simply the normalizing constant of the joint smoothing distribution, and its form is given by the denominator in \eqref{density-slct-n}. 
\end{proof}

\begin{proof}[Proof (Corollary \ref{corollary:pfresults})]
By Bayes' Theorem, we have
\begin{equation}
\label{pf-bayes}
p(\bm \theta_{t}| \bm \theta_{t-1}, \bm y_{t}) = \frac{p(\bm y_{t}| \bm \theta_{t})p(\bm \theta_{t}| \bm \theta_{t-1})}{p(\bm y_{t}| \bm \theta_{t-1})} \;.    
\end{equation}
Using the DLM and warpDLM properties, we know that
$$
p(\bm y_{t}| \bm \theta_{t}) = \Bar{\Phi}_n(g(\mathcal{A}_{y_t}); \bm F_t \bm \theta_t, \bm V_t) \quad \mbox{and} \quad p(\bm \theta_{t}| \bm \theta_{t-1}) = \phi_p(\bm G_t \bm \theta_{t-1}, \bm W_t) \;.
$$

Putting this together, recognize that the numerator of \eqref{pf-bayes} forms the kernel of a selection normal distribution \eqref{density-slct-n} with parameters given in \eqref{eq:importance}.  Furthermore, the marginal distribution $p(\bm y_{t}| \bm \theta_{t-1})$ is that kernel's normalizing constant, leading us to the result of \eqref{eq:weight}.
\end{proof}

\section{Gibbs Sampling}
\label{sec:gibbs}
From a Bayesian perspective, one benefit of latent data modeling is that a straightforward Gibbs sampler often results. Although we have the ability to directly sample from the posterior for the warpDLM, there are still advantages to using the Gibbs sampling approach.  First, as previously discussed, drawing from a multivariate truncated normal can become computationally intensive and somewhat error-prone in high dimensional situations.  The default Gibbs sampler does not suffer from this problem (computing time does scale with data size, but not sharply as with direct sampling).  Second, a Gibbs sampler allows us to incorporate model parameter priors and thus better quantify parameter uncertainty.  

The sampler for the warpDLM framework is similar to that developed in \cite{Kowal2020a}.  The Bayesian dynamic probit sampler \citep{albert1993bayes,Fasano2021} can be regarded as a special case.  In particular, if our data $y_t$ were binary, we could let our transformation $g$ be the identity while defining our rounding operator such that $\mathcal{A}_{y_t=0}=(-\infty,0)$ and $\mathcal{A}_{y_t=1}=(0, \infty)$, thus recovering the probit Gibbs sampling algorithm.

More explicitly, let $\bm \psi$ represent the non-state model parameters (e.g., the observation and evolution variances).  Our MCMC sampler has three basic steps, shown in Algorithm \ref{algo:gibbs}.

\begin{algorithm}
\label{algo:gibbs}
\begin{enumerate}
    \item \textbf{Sample the latent data}: draw $[\bm z_t|\mathcal{D}, \theta_{1:T}, \psi]$ from $N_n(\bm F_t \bm \theta_t, \bm V_t)$ truncated to $\bm z_t \in g(\mathcal{A}_{y_t})$ for $t=1,\ldots,T$
    \item \textbf{Sample the states}: draw $[\bm \theta_{1:T}|\bm z_{1:T}, \bm \psi]$ (i.e. from the typical DLM smoothing distribution)
    \item \textbf{Sample the parameters}: draw $[\bm \psi| \theta_{1:T}, z_{1:T}^*]$ (from the appropriate full conditionals depending on priors)
\end{enumerate}
 \caption{Gibbs Sampler for WarpDLM}
\end{algorithm}

In each pass, one can also easily draw from the posterior predictive distribution for $\bm z_t$ as well as the forecast distribution. Samples from the corresponding distributions for $\bm y_t$ are then given by computing $h\left(g^{-1}(\bm z_t)\right)$.

In step 1, we are sampling from many low-variate truncated normal distributions; hence the Gibbs sampling approach doesn't suffer from the same dimensionality problems as the direct sampling method.  In step 2, we can rely on already-developed Kalman filter-based simulation techniques, such as FFBS \citep{fruhwirth1994data,carter1994gibbs}, implemented in the R package \verb|dlm| \citep{petrisDLMPackage}, or the simulation smoother of \cite{durbin2002simple} found in the \verb|KFAS| package \citep{HelskeKFAS}.  In practice, the simulation smoother has better computing performance compared to FFBS.

In a fully Bayesian framework we would put priors on the variances, and sample from the appropriate full conditionals in step 3.  For univariate models, we put a Uniform$(0,A)$ prior with $A$ large on all standard deviations, leading to relatively simple Gibbs updates.  In the multivariate context, we place independent inverse-Wishart priors on variance matrices as in \cite{petrisDLM}.

As briefly mentioned in the main body of the article, we can also utilize the selection normal results of Section \ref{subsec:warpeddlmtheory} to construct a blocked Gibbs sampler.  In particular, instead of sampling the latent $\bm z_t$, we can just sample the states directly from the warpDLM joint smoothing distribution \eqref{eq:smoothing}, unconditional on $\bm z_t$.  By marginalizing over the latent data, there is some possibility to improve mixing within the Gibbs sampler, although each draw from this alternate Gibbs sampler may take longer.

\section{Additional Simulation Results}
\label{sec:simextra}
In the main article, we did not show the binomial DGLM forecasting results for the bounded, zero-inflated Poisson simulation because it performed much worse than all other methods and therefore skewed the graphics.  For completeness we display here in Figures \ref{fig:ZIPcal} and \ref{fig:ZIPsharp} the corresponding calibration and sharpness plots with binomial DGLM included.

\begin{figure}
    \centering
    \includegraphics[width=.6\linewidth,height=\textheight,keepaspectratio]{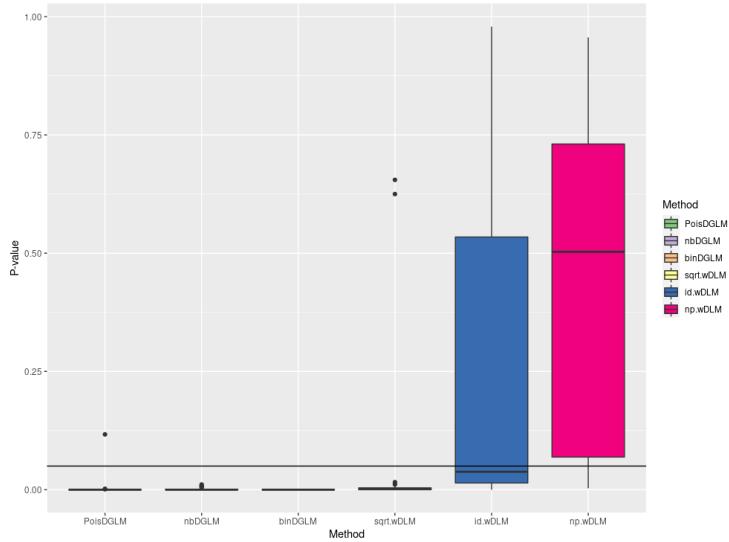}
    \caption{Box plot of p-values measuring calibration (larger is better) across simulations, with a line at p=0.05}
    \label{fig:ZIPcal}
\end{figure}

\begin{figure}
    \centering
    \includegraphics[width=.6\linewidth,height=\textheight,keepaspectratio]{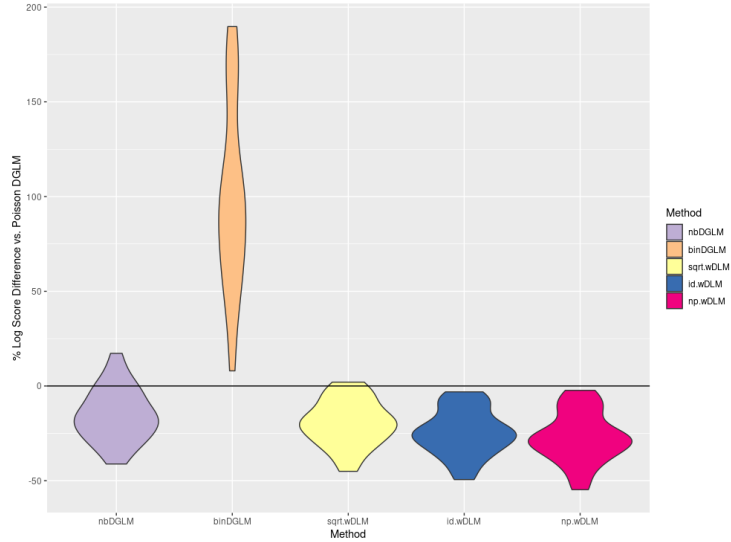}
    \caption{Violin plot of percent difference in log score compared to baseline across simulations (negative difference implies improved forecasting relative to Poisson DGLM)}
    \label{fig:ZIPsharp}
\end{figure}

\section{Application Details}
\label{sec:appdeets}

\subsection{Dataset Cleaning}
The Cincinnati dataset contains all fire department data, but our focus in this application is only on the overdose responses. To extract the appropriate observations, we followed the data dictionary in selecting heroin overdose responses as those with rows whose \verb|INCIDENT_TYPE_ID| was one of "HEROIF", "HEROIN-COMBINED", "HEROINF - FIRE ONLY", "HEROINF-FIRE ONLY", "HERON F", or "HERONF".  For the ``other'' overdose responses, we took all events which correspond to EMS protocol 23 (Overdose/Poisoning), i.e. those rows whose \verb|INCIDENT_TYPE_ID| started with the number 23.

Once the appropriate responses were selected, the only cleaning necessary was to count the number of occurrences per day and properly format dates. We did so by binning the observations into days based on the \verb|CREATE_TIME_INCIDENT| and assigning zeroes on any days where no incidents were reported.

\subsection{Model Details}
The warpDLM latent specification for the application is an example of Seemingly Unrelated Time Series Equations (SUTSE), where the inter-series dependence is defined through the covariance matrix of the model errors. Given the linear growth model for each series as defined in equations \eqref{eq:SUTSE_observation}--\eqref{eq:SUTSE_slope}, the full bivariate model can be rewritten in the general DLM formulation. Specifically, with $\bm \theta_t = (\mu_{1,t}, \mu_{2,t}, \beta_{1,t}, \beta_{2,t})'$, the state evolution equation  is
\begin{equation}
\label{eq:SUTSE_state}
\bm \theta_t
=
\begin{bmatrix}
1 & 0 & 1 & 0\\
0 & 1 & 0 & 1\\
0 & 0 & 1& 0\\
0 & 0 & 0 & 1
\end{bmatrix}
\bm \theta_{t-1}
+ \bm w_t
\quad 
\bm w_t \sim N_4\left(\bm 0, \bm W = 
\left[\begin{array}{@{}c|c@{}}
\normalfont\Large \bm W_\mu
  & \bigzero \\
\hline
  \bigzero &
\normalfont\Large \bm W_\beta
\end{array}\right]\right)
\end{equation}
and the observation equation is
\begin{equation}
\label{eq:SUTSE_obs}
\bm z_t
=
\begin{bmatrix}
1 & 0 & 0 & 0\\
0 & 1 & 0 & 0
\end{bmatrix}
\bm \theta_t
+ \bm v_t
\quad 
\bm v_t \sim N_2\left(\bm 0, \bm V\right)
\end{equation}
for the latent $\bm z_t = (z_{1,t}, z_{2,t})$.

\bibliographystyle{apalike}
\bibliography{refs.bib}
\makeatletter\@input{xx.tex}\makeatother